\newcommand{\R}{\mathcal{R}}
\newcommand{\leaveout}[1]{}
\newtheorem{prop}{Proposition}
\newtheorem{theorem}{Theorem}
\DeclarePairedDelimiter{\ceil}{\lceil}{\rceil}
\DeclarePairedDelimiter\floor{\lfloor}{\rfloor}
\newcommand\myworries[1]{\textcolor{red}{#1}}
\begin{document}
\title{Adversarial Task Assignment}
\author{
	Chen Hajaj and  
	Yevgeniy Vorobeychik
	\\ 
	Electrical Engineering and Computer Science\\Vanderbilt University\\Nashville, TN\\
	\{chen.hajaj,
	yevgeniy.vorobeychik\}@vanderbilt.edu
}
\date{}
\maketitle

\begin{abstract}
The problem of assigning tasks to workers is of long-standing fundamental importance. Examples of this include the classical problem of assigning computing tasks to nodes in a distributed computing environment, assigning jobs to robots, and crowdsourcing. Extensive research into this problem generally addresses important issues such as uncertainty and incentives. However, the problem of adversarial tampering with the task assignment process has not received as much attention. 

We are concerned with a particular adversarial setting where an attacker may target a set of workers in order to prevent the tasks assigned to these workers from being completed. 
When all tasks are homogeneous, we provide an efficient algorithm for computing the optimal assignment. When tasks are heterogeneous, we show that the
adversarial assignment problem is NP-Hard, and present an algorithm for solving it approximately. Our theoretical results are accompanied by extensive experiments showing the effectiveness of our algorithms.

\end{abstract}

\section{Introduction}
The problem of allocating a set of tasks among a collection of workers has been a fundamental research question in a broad array of domains, including
distributed computing, robotics, and, recently, crowdsourcing~\cite{alistarh2012allocate,stone1999task,liu2017sequential}.
Despite the extensive interest in the problem, however, there is little prior work on task assignment in settings where workers may be attacked.
Such \emph{adversarial task assignment} problems can arise, for example, when tasks are of high economic or political consequence, such as in robotic rescue missions following terror activities, or crowdsourcing to determine 
which executables are malicious or benign, or which news stories constitute fake news. 

We investigate the adversarial task assignment problem in which a rational external attacker targets one or more workers after tasks have already been assigned.
Equivalently, this can be viewed as a robust task assignment problem with unknown uncertainty about worker failures.
We formalize the interaction between the attacker and requester (defender) as a Stackelberg game in which the defender first chooses an assignment, and the attacker subsequently attacks a set of workers so as to maximize the defender's losses from the attack.
We seek a strong Stackelberg equilibrium (SSE) of this game and focus on computing an optimal robust assignment.

Our analysis begins with 
a setting in which tasks are homogeneous, that is, all tasks have the same utility for the defender (e.g., rescue soldiers from a battlefield, or label a large dataset of images). 
We characterize the optimal structure of a robust assignment, and use this insight to develop an algorithm that extracts this assignment in time linear in the number of tasks and targets, and quadratic in the number of workers.
We show that this algorithm significantly outperforms several baselines, and obtains a good solution even when no adversary is present.



Next, we turn to heterogeneous task settings. This case, it turns out, is considerably more challenging. 
Specifically, we show that it may be beneficial to assign more than a single worker to a task.
Moreover, even if we impose a restriction that only a single worker can be assigned to a task (optimal when tasks are homogeneous), extracting the optimal assignment is strongly NP-Hard. To overcome this issue, we  propose an integer programming approach for solving the restricted problem, as well as an algorithm for finding an approximately optimal assignment in the general case. 
Again, our experiments show that our approach significantly outperforms several baselines.
\paragraph{Related Work} 

The problem of task assignment in adversarial settings has been considered from several perspectives.
One major stream of literature is about robots acting in adversarial environments. 
Alighanbari and How~\cite{alighanbari2005cooperative} consider assigning weapons to targets, somewhat analogous to our problem, but do not model the decision of the adversary; their model also has rather different semantics than ours.
Robotic soccer is another common adversarial planning problem, although the focus is typically on coordination among multiple robots when two opposing teams are engaged in coordination and planning~\cite{jones2006dynamically}.

Another major literature stream which considers adversarial issues is crowdsourcing.
One class of problems is a number of workers to hire~\cite{carvalho2016many}, the issue of individual worker incentives in truthfully responding to questions~\cite{singla2013truthful}, or in the amount of effort they devote to the task~\cite{tran2014budgetfix,elmalech2016extending,liu2017sequential}, rather than adversarial reasoning per se.
Another, more directly adversarial setting, considers situations where some workers simply answer questions in an adversarial way~\cite{Ghosh11,Steinhardt16}.
However, the primary interest in this work is robust estimation when tasks are assigned randomly or exogenously, rather than task assignment itself.
Similarly, prior research on machine learning when a portion of data is adversarially poisoned~\cite{chen-icml11,xu2010,feng2014robust,chen2013robust,Liu17} focuses primarily on the robust estimation problem, and not task assignment; in addition, it does not take advantage of structure in the data acquisition process, where workers, rather than individual data points, are attacked. Other works \cite{gu2005robust,alon2015robust} focus on the change of the system after the assignment process and the structure of the social network rather than the assignment process itself.


Our work has a strong connection to the literature on Stackelberg security games~\cite{conitzer2006computing,korzhyk2010complexity,tambe2011security}.
However, the mathematical structure of our problem is quite different.
For example, we have no protection resources to allocate, and instead, the defender's decision is about assigning tasks to potentially untrusted workers.


\section{Model}
\label{sec:model}


Consider an environment populated with a single requester (hereafter denoted by ``\textit{defender}''), a set of $n$ workers, $W$, a set of $m$ tasks, $T$, and an \emph{adversary}.
Furthermore, each worker $w \in W$ is characterized by a capacity constraint $c_w$, which is the maximum number of tasks it can be assigned, and an individual proficiency or the probability of successfully completing a task, denoted by $p_{w}$.
Worker proficiencies are assumed to be common knowledge to both the defender and attacker.
Such proficiencies can be learned from experience~\cite{sheng2008get,dai2011artificial,manino2016efficiency}; moreover, in many settings, these are provided by the task assignment (e.g., crowdsourcing) platform, 
in the form of a reputation system~\cite{mason2012conducting}.

For exposition purposes, we index the workers by integers $i$ in decreasing order of their proficiency, so that $P=(p_1,\ldots,p_n) $ s.t. $ p_{i}\geq p_{j}$ $\forall i<j$, and denote the set of $ k $ most proficient workers by $W^k$. 
Thus, the capacity of worker $i$ would be denoted by $c_i$.
Each task $t \in T$ is associated with a utility $u_{t}$ that the defender obtains if this task is completed successfully.
If the task is not completed successfully, the defender obtains zero utility from it.

%

We focus on the common case where the defender faces a budget constraint of making at most $B \leq m$ assignments; the setting with $B > m$ necessitates different algorithmic techniques, and is left for future work.
The defender's fundamental decision is the \emph{assignment} of tasks to workers. 
Formally, an assignment $s$ specifies a subset of tasks $T'(s)$ and the set of workers, $W_t(s)$ assigned to each task $t \in T'(s)$.

Suppose that multiple workers are assigned to a task $t$, and let $L_t(s)$ denote the labels returned by workers in $W_t(s)$ for $t$ (for example, these could simply indicate whether a worker successfully complete the task).
Then the defender determines the final label to assign to $t$ (e.g., whether or not the task has been successfully completed) according to some deterministic mapping $ \delta: L_t(s) \to l $ (e.g., majority label), such that $ L\in \{1,\ldots,j_t\}^{|W_t(s)|} $ and $ l\in\{1,\ldots,j_t\} $.
Naturally, whenever a single worker $w$ is a assigned to a task and returns a label $l_w$, $\delta(l_w) = l_w$.
Let $\iota_t$ be the (unknown) correct label corresponding to a task $t$; this could be an actual label, such as the actual object in the image, or simply a constant 1 if we are only interested in successful completion of the task.
The defender's expected utility when assigning a set of tasks $T'(s)$ to workers and obtaining the labels is then
\begin{equation}
u_{def}(s)=\sum_{t\in T'(s)} u_{t} \Pr\{\delta(L_t(s)) = \iota_t\},
\label{eq:basic}
\end{equation}
where the probability
is with respect to worker proficiencies (and resulting stochastic realizations of their outcomes).

It is immediate that in our setting if there is no adversary and no capacity constraints for the workers, all tasks should be assigned to the worker with the highest $p_w$.
Our focus, however, is how to optimally assign workers to tasks when there is an intelligent adversary who may subsequently (to the assignment) attack a set of workers.
In particular, we assume that there is an adversary (attacker) with the goal of minimizing the defender's utility $u_{def}$; thus, the game is zero-sum.
To this end, the attacker chooses a set of $ \tau $ workers to attack, for example, by deploying a cyber attack against the corresponding computer nodes, physical attacks on search and rescue robots, or attacks against the devices on which the human workers performs their tasks.
Alternatively, our goal is to be robust to $ \tau $-worker failures (e.g., $N$-$\tau$-robustness~\cite{Chen14}).
We encode the attacker's strategy by a vector $\alpha$ where $\alpha_w = 1$ iff a worker $w$ is attacked (and $\sum_w \alpha_w = \tau$ since $ \tau $ workers are attacked).
The adversary's attack 
takes place \emph{after} the tasks have already been assigned to workers, where the attacker knows the actual assignments of tasks to workers before deploying the attack, and the consequence of an attack on a worker $w$ is that all tasks assigned to $w$ fail to be successfully completed.
Clearly, when an attacker is present, the policy of assigning all tasks to the most competent worker (when there are no capacity constraints) will yield zero utility for the defender, as the attacker will simply attack the worker to whom all the tasks are assigned.
The challenge of how to split the tasks up among workers, trading off quality with robustness to attacks, is the subject of our inqury.
Formally, we aim to compute a strong Stackelberg equilibrium of the game between the defender (leader), who chooses a task-to-worker assignment policy, and the attacker (follower), who attacks a single worker~\cite{stackelberg1952theory}.

\section{Homogeneous tasks}
We start by considering tasks which are \emph{homogeneous}, that is, $u_t = u_{t'}$ for any two tasks $t,t'$.
Without loss of generality, suppose that all $u_t = 1$. 
Note that since all tasks share the same utility, if $ B < m $, the defender is indifferent regarding the identity of tasks being assigned.
Further, it is immediate that we never wish to waste budget, since assigning a worker always results in non-negative marginal utility.
Consequently, we can simply randomly subsample $B$ tasks from the set of all tasks, and consider the problem with $m=B$.

We overload the notation and use $s = \{s_1,\ldots,s_n\}$ to denote the number of tasks allocated to each worker.
Although the space of deterministic assignments is large, we now observe several properties of optimal assignments which allow us to devise an efficient algorithm for this problem. 

\begin{prop}\label{prop:oneEachDeterministic} 
	Suppose that tasks are homogeneous. 
	For any assignment $s$ there is a weakly utility-improving assignment $s'$ for the defender which assigns each task to a single worker.
\end{prop}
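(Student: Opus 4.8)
The plan is to prove this by a local interchange (exchange) argument: starting from an arbitrary assignment $s$, I repeatedly transform it into an assignment that covers each task with at most one worker, never decreasing the defender's robust (worst-case) utility at any step. Throughout, write $U(s) = \min_{\alpha} u_{def}(s,\alpha)$ for the value of $s$ against the attacker's best response, since the game is zero-sum.

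First, a counting observation guarantees that the interchange is always available. After subsampling we have $m = B$, and the total number of worker--task assignments is at most $B = m$; so if some task $t$ receives two or more workers, then the number of covered tasks is at most $(\text{\# assignments}) - 1 \le m - 1$, and hence at least one task $t_0$ receives no worker at all. I then define $s'$ from $s$ by taking a single worker $a \in W_t(s)$ off of $t$ and assigning it instead to the empty task $t_0$. This preserves every worker's load (hence all capacity constraints) and the total budget, and it strictly decreases $\sum_{t} \max(|W_t(s)|-1,\,0)$, so iterating terminates in an assignment with exactly one worker per covered task.

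The core of the argument is to show $U(s') \ge U(s)$, and the key idea is to prove the stronger, attack-by-attack statement $u_{def}(s',\alpha) \ge u_{def}(s,\alpha)$ for \emph{every} fixed attack $\alpha$; taking minima over $\alpha$ on both sides then yields the claim and, crucially, sidesteps the fact that the attacker re-optimizes against $s'$. Fix $\alpha$. Only the terms for $t$ and $t_0$ change. If $a$ is attacked under $\alpha$, then $a$ already contributes a failure in $s$ and is simply absent in $s'$, so task $t$ is unaffected while $t_0$ is empty (value $0$) in both assignments; the net change is $0$. If $a$ is not attacked, then $t_0$ gains exactly the standalone completion probability $p_a$ (a lone worker's label is passed through by $\delta$), whereas $t$ loses only the marginal contribution of $a$ within its surviving worker set $S$, i.e. $[\Pr\{t\text{ done with }S\}] - [\Pr\{t\text{ done with }S\setminus\{a\}\}]$.

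The main obstacle --- and the only place where homogeneity and the aggregation rule $\delta$ enter --- is bounding this marginal loss by the gain $p_a$. I would establish that a worker's marginal contribution to a task is largest when it is the sole assigned worker, so the loss is at most $p_a$; for the natural ``task completed if any assigned worker succeeds'' rule this is the explicit inequality $p_a \prod_{w \in S\setminus\{a\}}(1-p_w) \le p_a$. Hence the gain $p_a$ weakly dominates the loss for every $\alpha$, giving $u_{def}(s',\alpha)\ge u_{def}(s,\alpha)$ and therefore $U(s')\ge U(s)$. I expect verifying this marginal-value bound for an unrestricted $\delta$ to be the delicate point; if $\delta$ is left fully general, one can instead fix the defender's induced completion-probability function and argue that it is subadditive in the assigned worker set, which is precisely the property the exchange requires.
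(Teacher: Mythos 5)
Your proposal is correct, and it is built on the same exchange skeleton as the paper's proof: find a task $\bar{t}$ covered by two or more workers, note via the budget constraint that some task $t'$ must then be uncovered, move one worker from $\bar{t}$ to $t'$, and iterate until every covered task has a single worker. The genuine difference is how each argument certifies that an exchange does not hurt the defender. The paper fixes the attacker's best response to $s$ (a single attacked worker $\bar{w}$) and case-analyzes the attacker's re-optimized behavior after the move: if the attack stays on $\bar{w}$, the net gain is $p_{w} - u_{w\bar{t}}^M \ge 0$; if it switches to the moved worker $w$, the gain is $u_{\bar{w}} - u_w \ge 0$, which uses the best-response inequality $u_w \le u_{\bar{w}}$. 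You instead prove the stronger pointwise claim $u_{def}(s',\alpha) \ge u_{def}(s,\alpha)$ for \emph{every} fixed attack $\alpha$ and then take minima over $\alpha$. This buys you three things: no case analysis over the attacker's reaction, no reliance on the best-response property, and an argument that works verbatim for any number $\tau$ of attacked workers, whereas the paper's write-up is phrased for a single attacked worker. Both arguments ultimately rest on the same key inequality --- a worker's marginal contribution to a multiply-covered task is at most its standalone success probability $p_w$ --- and here you are in fact more careful than the paper: the paper asserts $p_{w} - u_{w\bar{t}}^M \ge 0$ with no justification, while you flag it as the delicate point, verify it explicitly for the ``completed if any assigned worker succeeds'' rule, and correctly identify subadditivity of the completion probability in the assigned worker set as the property a general aggregation rule $\delta$ must satisfy for the exchange to go through. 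That caveat is real: for a fully arbitrary deterministic $\delta$ (constrained only to pass through a lone worker's label) the marginal contribution can exceed $p_w$, so some such restriction on $\delta$ is implicitly assumed by the paper's proof as well.
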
 
\begin{proof}
	Consider an assignment $s$ and the corresponding best response by the attacker, $\alpha$, in which a worker $\bar{w}$ is attacked.
	Let a task $\bar{t}$ be assigned to a set of workers $W_{\bar{t}}$ with $|W_{\bar{t}}| = k > 2$.
	Then there must be another task $t'$ which is unassigned.
	Now consider a worker $w \in W_{\bar{t}}$.
	Since utility is additive, we can consider just the marginal utility of any worker $w'$ to the defender and attacker; denote this by $u_{w'}$.
	Let $T_{w'}$ be the set of tasks assigned to a worker $w'$ under $s$.
	Let $u_w = \sum_{t \in T_w} u_{wt}^M$, where $u_{wt}^M = u_{t} \Pr\{\delta(L_t(s)) = \iota_t\} - u_{t} \Pr\{\delta(L_t(s)\setminus L_t^w) = \iota_t\}$ is the marginal utility of worker of $w$ towards a task $t$.
	Clearly, $u_w \le u_{\bar{w}}$, since the attacker is playing a best response.
	
	Suppose that we reassign $w$ from $\bar{t}$ to $t'$.
	If $w = \bar{w}$, the attacker will still attack $w$ (since the utility of $w$ to the attacker can only increase), and the defender is indifferent.
	If $w \ne \bar{w}$, there are two cases: (a) the attacker still attacks $\bar{w}$ after the change, and (b) the attacker now switches to attack $w$.
	Suppose the attacker still attacks $\bar{w}$.
	The defender's net gain is $p_{w} - u_{w\bar{t}}^M \ge 0$.
	If, instead, the attacker now attacks $w$, the defender's net gain is $u_{\bar{w}} - u_w \ge 0$.
\end{proof}
Consequently, we can restrict the set of assignments to those which assign a single worker per task; we denote this restricted set of assignments by $S$.
Given a assignment $s \in S$ and the attack strategy $\alpha$, the defender's expected utility is: 
\begin{equation}\label{eq:pure}
u_{def}(s,\alpha)=\sum_{w\in W}s_{w} p_{w}(1-\alpha_w)
\end{equation}
Next, we show that
there is always an optimal assignment that assigns tasks to the $k$ most proficient workers, for some $k$.

\begin{prop}\label{th:optDet}
	In an optimal assignment $s$, suppose that $s_i > 0$ for $i > 1$.
	Then there must be an optimal assignment in which $s_{i-1} > 0$.
\end{prop}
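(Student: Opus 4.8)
The plan is to argue by a single local exchange. Starting from an optimal assignment $s$ with $s_{i-1}=0$ and $s_i>0$, I would move one task from worker $i$ to worker $i-1$ to obtain a new assignment $s'$, and show that the defender's equilibrium utility does not decrease, i.e. $U(s')\ge U(s)$. Since $s$ is optimal, $s'$ is then optimal as well, and by construction $s'_{i-1}=1>0$, which is exactly the claim. Feasibility of the move is immediate: $s_i\ge 1$ so $s'_i\ge 0$, the total number of assigned tasks (hence the budget) is unchanged, and we only need $c_{i-1}\ge 1$, which holds for any usable worker.

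To set this up I would first make the adversary's response explicit. Because the game is zero-sum and the attacker best-responds to the announced assignment, the attacker simply removes the $\tau$ workers carrying the largest loads $\ell_w := s_w p_w$. Writing $\mathrm{Top}_\tau(s)$ for the maximum total load over any $\tau$-subset of workers, the defender's equilibrium utility is $U(s)=\sum_w \ell_w-\mathrm{Top}_\tau(s)$. The proposed move changes only two loads: worker $i$'s load drops by $p_i$ and worker $i-1$'s load rises from $0$ to $p_{i-1}$, so the total load increases by exactly $p_{i-1}-p_i\ge 0$, using that workers are indexed in decreasing proficiency. Hence it suffices to show the adversary's damage cannot grow by more than this gain, namely $\mathrm{Top}_\tau(s')-\mathrm{Top}_\tau(s)\le p_{i-1}-p_i$.

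The heart of the argument, and the step I expect to be the main obstacle, is bounding this change in $\mathrm{Top}_\tau$, because both modified loads could enter or leave the attacked set. I would handle it by an exchange argument on the attacker's optimal post-move set $S'$: split into four cases according to whether $i-1$ and $i$ lie in $S'$, and in each case exhibit a $\tau$-subset whose \emph{pre}-move total load certifies the desired inequality. Two observations drive every case: any $\tau$-subset has pre-move value at most $\mathrm{Top}_\tau(s)$ (by definition of the maximum), and $\ell_i=s_i p_i\ge p_i$ since $s_i\ge 1$. The latter inequality is precisely what absorbs the extra $p_{i-1}$ contributed when worker $i-1$ is attacked after the move; in the case where $i-1$ enters $S'$ but $i$ leaves, for instance, swapping $i-1$ for $i$ in $S'$ yields a pre-move attack whose value exceeds the post-move value by at least $\ell_i-p_{i-1}\ge p_i-p_{i-1}$, which rearranges to the bound we want. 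Assembling the four cases gives $\mathrm{Top}_\tau(s')-\mathrm{Top}_\tau(s)\le p_{i-1}-p_i$, and therefore $U(s')-U(s)=(p_{i-1}-p_i)-\bigl(\mathrm{Top}_\tau(s')-\mathrm{Top}_\tau(s)\bigr)\ge 0$, which completes the proof.
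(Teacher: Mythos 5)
Your proposal is correct and takes essentially the same route as the paper: move a single task from worker $i$ to worker $i-1$ and show, by a case analysis on how the attacker's best-response set changes, that the defender's equilibrium utility cannot decrease, so the modified assignment is also optimal. Your $\mathrm{Top}_\tau$ formulation and the swap argument (using $\ell_i = s_i p_i \ge p_i$ to absorb the new load $p_{i-1}$) is a cleaner, more rigorous rendering of the same exchange-and-case-analysis idea the paper uses.
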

\begin{proof}
	Consider an optimal assignment $s$ and the attacker's best response $\alpha$ in which $\bar{W}$ is the set of workers being attacked.
	Now, consider moving 1 task from $i$ to $i-1$. We denote the updated set of workers attacked (due to this change) as $ \bar{W'} $.
	Suppose that $i \in \bar{W}$, that is, the worker $i$ was initially attacked.
	If $ i-1 \in \bar{W'} $, there are two potions: 1) $ i \in \bar{W'} $ (i.e., $ i $ is still being attacked) and hence the net gain to the defender does not change, and 2) $ i \notin \bar{W'} $ and hence the net gain to the defender is $p_i (|T_i|-1) \ge 0$.
	If $ i-1 \notin \bar{W'}$, the net gain is $p_{i-1} > 0$.
	Suppose that $i \notin \bar{W}$.
	If $i-1$ is now attacked, the net gain is $p_w (|T_w|-1) \ge 0$ (where $ w \in \bar{W} $ and $ w \notin \bar{W'} $).
	Otherwise (i.e., $ i-1 \notin \bar{W'} $), the net gain is $p_{i-1} - p_i \ge 0$.
\end{proof}

We can now present an assignment algorithm for optimal assignment (Algorithm~\ref{alg:deterministic}) which has complexity $O(n^2m\tau)$.
The intuition behind the algorithm is to consider each worker $i$ as a potential target of an attack, and then compute the best assignment subject to a constraint that $i$ is attacked (i.e., that $p_i s_i \ge p_js_j$ for all other workers $j \ne i$).
Subject to this constraint, we consider all possible numbers of tasks that can be assigned to $i$, and then assign as many tasks as possible to the other workers in order of their proficiency (where the $ \tau $ workers that contribute the most to the defender's utility are attacked). The only special case (Steps 7-10) is when assigning the last worker. In this case, it may be beneficial to alternate the last two workers' assignments to result in a more beneficial overall assignment.
Optimality follows from the fact that we exhaustively search possible targets and allocation policies to these, and assign as many tasks as possible to the most effective workers.

\begin{algorithm}[hbt]
	\caption{Homogeneous assignment}
	\label{alg:deterministic} 
	\begin{flushleft}
		\textbf{input:} The set of workers $W$, and their proficiencies $P$\\
		\textbf{return:} The optimal policy $ s^* $\\
	\end{flushleft}
	\vspace{-5pt}
	\begin{algorithmic}[1]
		\State $ u_{max} \leftarrow 0 $
		\For {$ i \in \{1,\ldots,n\} $}
		\For {$ s_i \in \{1,\ldots, c_i\} $}
		\State $\Upsilon_i \leftarrow s_ip_i,$
		$ B \leftarrow m - s_i $
		\For { $ j \in \{1,\ldots,n\}\setminus i $ }
		\State $ s_j \leftarrow \min(\floor*{\frac{p_i}{p_j}  s_i},B,c_j), B \leftarrow B-s_j $
		\If {$j<n \land B+1 \leq \min(\floor*{\frac{p_i}{p_{j+1}}  s_i}-1,c_{j+1})$}
		\State $ s' \leftarrow s $, $ s'_j\leftarrow s_j -1 $
		\If {$u_{def}(s,\alpha)\leq u_{def}(s',\alpha')+p_{j+1}$}
		\State $ s_j \leftarrow s_j-1, B \leftarrow B + 1 $
		\EndIf
		\EndIf					
		\State $ \Upsilon_j \leftarrow s_jp_j $
		\EndFor
		\State Sort $ \Upsilon $ in ascending order
		\State $ util\leftarrow \sum_{k=1}^{n-\tau}\Upsilon_k $
		\If {$ util >  u_{max}$}
		\State $ u_{max} \leftarrow util $, $ s^* \leftarrow s $
		\EndIf
		\EndFor
		\EndFor\\
		\Return $ s^* $                    
	\end{algorithmic} 
\end{algorithm}

Next, we turn to show that 	Algorithm 1 computes an optimal SSE commitment when tasks are homogeneous.
For readability, we denote the worker associated with the highest utility as $ w^{max} $, i.e., $ s_{w^{max}}p_{w^{max}} \geq s_wp_w \forall w \in W $. We focus on the case where at least one worker is not attacked as otherwise all possible assignments result in $ u_{def} = 0 $.
\begin{prop}\label{prop:mostProficient}
	Given an assignment $ s $ resulted from Algorithm 1, changing (i.e., decreasing or increasing) the number of tasks allocated to the worker with the highest utility ($ w^{max} $) is  promised to result in the same of lower utility for the defender.
\end{prop}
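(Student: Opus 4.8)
The plan is to treat $w^{max}$ as a worker that is \emph{necessarily} attacked and therefore contributes nothing to the defender, and then to run two exchange arguments — one for increasing and one for decreasing $s_{w^{max}}$ — in the style of the proof of Proposition~\ref{th:optDet}. First I would fix the standing facts. Since we restrict to the case where at least one worker survives, $1\le\tau<n$; and because $w^{max}$ maximizes $s_w p_w$, it belongs to the attacker's top-$\tau$ set, so its tasks contribute $0$ to $u_{def}$. I would also record the structural invariant produced by Algorithm~\ref{alg:deterministic}: for the designated max-worker $i$, every other worker $j$ is assigned $s_j=\min(\floor*{\frac{p_i}{p_j}s_i},B,c_j)$, so each $j$ is saturated against one of three caps — the utility cap $s_jp_j\le s_{w^{max}}p_{w^{max}}$, its capacity $c_j$, or the budget — and in all cases $s_jp_j\le s_{w^{max}}p_{w^{max}}$.

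For the increase direction, the budget is exhausted (we take $m=B$ and never waste budget), so any task added to $w^{max}$ must be removed from some worker $w'$. Since $w^{max}$'s utility only grows it stays attacked, so the moved task remains unproductive, while removing a task from $w'$ weakly lowers the defender's post-attack payoff. I would confirm this by the same case split as in Proposition~\ref{th:optDet}: if the attacked set is unchanged, $u_{def}$ is flat (when $w'$ was attacked) or drops by $p_{w'}$ (when $w'$ was not); if instead $w'$ leaves the top-$\tau$ set and is replaced by some $w''$, the net change equals $(s_{w'}-1)p_{w'}-s_{w''}p_{w''}\le 0$, since $w''$ overtakes $w'$ exactly when this quantity is negative. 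Either way $u_{def}$ weakly decreases.

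For the decrease direction, the freed task is reassigned to some worker $w'$ (again, never wasting budget), and here the saturation invariant does the work. If $w'$ is capped by capacity the move is infeasible. If $w'$ is capped by the utility constraint, then $(s_{w'}+1)p_{w'}>s_{w^{max}}p_{w^{max}}$, so after the move $w'$ becomes the strictly highest-utility worker and is itself attacked; the reassigned task is again wasted, and I would compute the induced reshuffle of the top-$\tau$ set and verify the net change is $\le 0$. The only remaining possibility is that $w'$ has slack below the utility cap, which forces the budget cap to have been binding; here I would argue the perturbed assignment is dominated by one the inner loop already evaluates at a smaller value of $s_{w^{max}}$ (or at a different designated max-worker), whose recorded utility is at most $u_{max}=u_{def}(s^*)$.

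I expect the main obstacle to be exactly this last case of the decrease direction: controlling how the attacker's best response reshuffles when $w'$ crosses the attack threshold, or when $w^{max}$ itself drops out of the top-$\tau$ set. A purely local exchange bound can spuriously suggest an improvement, and the reason none exists is that Algorithm~\ref{alg:deterministic} has already optimized over every value of $s_{w^{max}}$ through its inner loop. Making this interaction between the local move and the exhaustive enumeration precise — rather than reasoning within a single fixed regime — is the crux of the argument.
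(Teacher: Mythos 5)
There is a genuine gap, and it sits exactly where you express the most confidence rather than where you anticipate trouble. In the decrease direction, utility-cap case, you claim that after moving a task from $w^{max}$ to a worker $w'$ saturated at $s_{w'}=\floor*{s_{w^{max}}p_{w^{max}}/p_{w'}}$, the worker $w'$ becomes the attacked worker and one can ``verify the net change is $\le 0$.'' That inequality is false as a local statement. For $\tau=1$ the net change is $(s_{w^{max}}-1)p_{w^{max}}-s_{w'}p_{w'}$, and the floor in the saturation condition only guarantees $s_{w'}p_{w'}>s_{w^{max}}p_{w^{max}}-p_{w'}$; when $p_{w'}>p_{w^{max}}$ the slack can exceed $p_{w^{max}}$, making the net change strictly positive. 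Concretely: two workers with $p_1=1$, $p_2=0.3$, $\tau=1$, and the assignment $s=(1,5)$. This satisfies your saturation invariant (worker $2$ is $w^{max}$ with utility $1.5$, and $s_1=\floor*{1.5/1}=1$ is capped by the utility constraint), yet moving one task from $w^{max}$ to worker $1$ gives $(2,4)$: the attacker switches to worker $1$, and the defender's utility rises from $1$ to $1.2$. So no computation of the ``induced reshuffle'' can close this case from the invariants you listed; what rules such configurations out as outputs of Algorithm~\ref{alg:deterministic} is not any exchange bound but the fact that the algorithm also evaluated the pair $(i=1,s_1=2)$ and kept the better assignment --- i.e., global optimality over the enumeration.

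Once you grant that, your proof collapses into the paper's, which consists exactly (and only) of that appeal: Step 3 iterates over every worker as the designated highest-utility worker and over every possible task count for it, returning the best completion found; hence if some other count for $w^{max}$ admitted higher utility, the algorithm would have output that assignment instead. Your fallback in the budget-slack sub-case is this argument verbatim, and it already covers both directions and every sub-case, so the exchange machinery is unnecessary where it is sound (your increase-direction analysis, in the style of Proposition~\ref{th:optDet}, is correct but subsumed) and unsound where it is actually needed. The underlying lesson is that the proposition is not a local-optimality property of saturated assignments --- the counterexample above shows such a property fails --- but a property of the output of an exhaustive search, and it has to be proved as such.
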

\begin{proof}
	Since Algorithm 1 iterates over all possible assignments for this worker (Step 3), if assigning less (or more) tasks was profitable, this updated assignment was resulted by the algorithm. 
\end{proof}
\begin{prop}\label{prop:moreBecomesTarget}
	Given an assignment $ s $  and two (arbitrary) workers $ j $ and $ k $, such that $ p_k < p_j $ and $ \alpha_j = 0 $ (i.e., $ j $ is not attacked under $ s $), the defender cannot move a task from $ k $ to $ j $ without making $ j $ a target.
\end{prop}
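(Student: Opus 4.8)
The plan is to exploit the structure that Algorithm~\ref{alg:deterministic} imposes on its output $s$, exactly in the spirit of the proof of Proposition~\ref{prop:mostProficient}. First I would recall from~\eqref{eq:pure} that against any fixed assignment the attacker's best response is to attack the $\tau$ workers with the largest marginal utilities $\Upsilon_w = s_w p_w$, since $u_{def}(s,\alpha)$ is just the sum of $s_w p_w$ over the non-attacked workers. Hence $\alpha_j = 0$ is equivalent to $\Upsilon_j$ not being among the top $\tau$ values, and in particular $\Upsilon_j \le \Upsilon_{w^{max}} = s_{w^{max}} p_{w^{max}}$, whose worker is certainly attacked whenever $\tau \ge 1$; thus $j \ne w^{max}$.

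Next I would pin down how many tasks the algorithm gives to $j$. Let $i$ be the target worker fixed in the iteration producing $s$, so that Step~6 sets $s_j = \min\!\big(\lfloor \tfrac{p_i}{p_j} s_i \rfloor,\, B,\, c_j\big)$. The key claim is that whenever the move in the statement is \emph{possible}, $j$ must be capped by the utility term $\lfloor \tfrac{p_i}{p_j} s_i \rfloor$ rather than by budget or capacity. Indeed, the move requires a task at $k$ (so $s_k \ge 1$) and room at $j$ (so $s_j < c_j$, ruling out the capacity cap). Because workers are processed in decreasing order of proficiency and $p_k < p_j$, worker $k$ is processed after $j$; if $j$ had exhausted the budget, the remaining budget would be $0$ when $k$ is reached, forcing $s_k = 0$ and contradicting $s_k \ge 1$. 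This rules out the budget cap and forces $s_j = \lfloor \tfrac{p_i}{p_j} s_i \rfloor$.

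With the utility cap in hand the conclusion is arithmetic: $s_j = \lfloor \tfrac{p_i}{p_j} s_i \rfloor$ gives $s_j + 1 > \tfrac{p_i}{p_j} s_i$, so after moving one task to $j$ its new utility satisfies $(s_j+1)p_j > s_i p_i = \Upsilon_{w^{max}}$. Therefore $j$ strictly exceeds the previous maximum utility and becomes the unique highest-utility worker, so the attacker's best response must now include $j$, i.e.\ $j$ becomes a target. I would close this part by observing that the simultaneous decrease of $\Upsilon_k$ only helps the defender and does not affect the comparison, so the argument is independent of whether $k$ was itself attacked.

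The main obstacle I anticipate is the algorithm's special ``last worker'' branch (Steps~7--10), in which $s_j$ may be set one below its utility cap in order to shift a task to $j+1$. If the moved task originates precisely from that shift (so $k=j{+}1$), restoring it only brings $j$ back to $\lfloor \tfrac{p_i}{p_j} s_i \rfloor$, giving $(s_j+1)p_j \le s_i p_i$, and the strict inequality above can fail. I would handle this by arguing that the branch is taken only because the swap strictly improved the defender's utility (Step~9), so moving the task back is never beneficial; the statement should therefore be read as ``no \emph{utility-preserving-or-improving} move to a non-attacked, more proficient worker exists that leaves $j$ un-attacked.'' Verifying this reading in the boundary branch, together with a clean treatment of the integer floor and of ties in $\Upsilon$ (broken in the defender's favor under the SSE assumption), are the remaining details.
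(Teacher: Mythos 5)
Your proposal follows the same route as the paper's own proof: both rest on the observation that Step~6 of Algorithm~\ref{alg:deterministic} gives each loop worker the maximal number of tasks whose contribution does not exceed that of the iteration's designated target $i$, so that one additional task pushes $j$ strictly above $\Upsilon_i = s_i p_i$, the maximum, forcing $j$ into the attacked set. In most respects you are \emph{more} careful than the paper: its proof simply asserts that $s_j$ equals the cap $\lfloor p_i s_i/p_j\rfloor$, ignoring the budget and capacity terms of the $\min$ in Step~6 altogether, whereas you dispose of the capacity cap correctly (the move is then infeasible), and your observation that the alternation branch (Steps~7--10) can leave $s_j$ one unit below its floor cap --- so that the statement as literally written can fail and must be weakened to ``no weakly improving such move exists'' --- identifies a real defect that the paper's proof never acknowledges.

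However, your budget-cap argument has a genuine hole. You claim that since $p_k < p_j$, worker $k$ is processed after $j$ in the loop of Step~5, so an exhausted budget would force $s_k = 0$. This fails when $k$ is the iteration's designated target $i$ itself: $i$ receives its $s_i \ge 1$ tasks in Step~3, \emph{before} the loop, and nothing prevents $p_i < p_j$. Concretely, take $p = (1, 0.99, 0.5)$, $m = B = 17$, $\tau = 1$, large capacities, and the iteration $i = 3$, $s_3 = 10$: Step~6 gives $s = (5,2,10)$, with worker $2$ strictly budget-capped and contributions $(5, 1.98, 5)$, so worker $2$ is not attacked; moving a task from worker $3$ to worker $2$ gives contributions $(5, 2.97, 4.5)$, and worker $2$ is \emph{still} not attacked (if the alternation branch fires and yields $(4,3,10)$ instead, the same move again leaves worker $2$ un-attacked). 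So for assignments produced by a single iteration the claim is simply false in this configuration, and your argument (like the paper's) never uses the one fact that could rescue it, namely that the output $s$ is the \emph{best} iteration. Indeed, in the bad configuration the move strictly increases $u_{def}$ (you essentially note this yourself at the end), so such an $s$ cannot be the algorithm's output; but turning that remark into a proof requires care, because the optimality of the output is exactly what this proposition is used to establish in Theorem~1, and invoking it here would be circular. A complete proof must therefore treat the case $k = i$ by a separate, non-circular argument, e.g., exhibiting another iteration of the algorithm whose assignment strictly dominates any such $s$.
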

\begin{proof}
	Assume in negation that $ j $ can be assigned with an additional task and that $ j \notin T $ still holds.
	On each iteration, Algorithm 1 assigns  worker $ i $ with $ s_i $ tasks (Step 3), than each time the algorithm gets to Step 5, it assigns some other worker with the maximal amount of tasks such that this worker do not contribute to the defender's utility more than $ i $ does. Specifically, this is also the case for $ j $. Hence, by assigning $ j $ an additional task it must become a target (i.e., $ j \in T $), contradicting the assumption. 
\end{proof}

\begin{theorem}
	Algorithm 1 computes an optimal SSE commitment when tasks are homogeneous.
\end{theorem}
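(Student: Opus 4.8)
The plan is to show that Algorithm~\ref{alg:deterministic} maximizes the defender's SSE payoff by exhaustively reconstructing the structure of an optimal assignment. First I would invoke the reductions already in hand: by Proposition~\ref{prop:oneEachDeterministic} it suffices to search over assignments in $S$ that put one worker per task, and by Proposition~\ref{th:optDet} an optimal assignment loads a prefix of the most proficient workers. Since the game is zero-sum, the attacker's best response to any $s$ fixes $\alpha_w = 1$ on the $\tau$ workers with the largest products $s_w p_w$, so by~(\ref{eq:pure}) the defender's equilibrium payoff equals $\sum_w s_w p_w$ minus the $\tau$ largest terms, equivalently the sum of the $n-\tau$ smallest values $s_w p_w$. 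This is precisely the quantity $util = \sum_{k=1}^{n-\tau}\Upsilon_k$ formed after the ascending sort in Steps 12--13, so the theorem reduces to showing that the algorithm maximizes $util$ over all feasible $s$.

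Next I would set up the correspondence between the outer loops and an optimal assignment. Let $s^*$ be optimal and let $w^*$ be its highest-value worker, which the attacker necessarily hits, with load $\ell^* = s^*_{w^*}$ and value $V^* = \ell^* p_{w^*}$. I would concentrate on the single iteration with $i = w^*$ and $s_i = \ell^*$, letting $\hat s$ be the assignment it produces. Because the algorithm returns the $\arg\max$ of $util$ over all $(i,s_i)$, and $\hat s$ is one candidate, it suffices to prove $u_{def}(\hat s) \ge u_{def}(s^*)$; feasibility of the returned assignment then forces equality. Both $\hat s$ and $s^*$ place worker $i$ at the maximal value $V^*$ and cap every other worker at value $\le V^*$: the term $\lfloor (p_i/p_j)s_i\rfloor$ in Step 6 is exactly the largest $s_j$ with $s_j p_j \le V^*$, so worker $i$ remains a maximizer in $\hat s$. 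The claim therefore reduces to a purely combinatorial statement: among all budget- and capacity-feasible fills of the remaining workers that respect this value cap, greedily loading workers in decreasing order of proficiency (as Steps 5--11 do) maximizes the surviving utility.

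For that combinatorial core I would argue by an exchange argument anchored on Propositions~\ref{prop:mostProficient} and~\ref{prop:moreBecomesTarget}. Proposition~\ref{prop:mostProficient} pins down the outer search by ruling out any improvement from changing the highest-value worker's load, and Proposition~\ref{prop:moreBecomesTarget} shows that in the greedy fill no task can be moved from a less- to a more-proficient worker without turning the latter into a new target. Together these say that the greedy fill packs value into the cheapest (most proficient) attacked workers and leaves the budget-limited remainder on the surviving workers. Transforming $s^*$ into $\hat s$ by repeatedly shifting a task toward a more proficient worker still below the cap, I would verify, case by case according to the attacked/surviving status of the two endpoints (as in the proofs of Propositions~\ref{prop:oneEachDeterministic}--\ref{th:optDet}), that $u_{def}$ does not decrease; the case $\tau = 1$ is immediate, since there the subtracted maximum is fixed at $V^*$ while the total value can only grow.

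The step I expect to be the main obstacle is the interaction between the integer rounding in the cap $\lfloor (p_i/p_j)s_i\rfloor$ and the attacker's selection of the top $\tau$ workers, which is exactly what the look-ahead in Steps 7--10 is meant to absorb. When filling the remaining workers exhausts the budget one task short of the cap, it can be strictly better to leave a more proficient worker below the cap and hand the freed task to the next, surviving worker $j+1$, gaining $p_{j+1}$; Steps 8--9 test precisely this trade by comparing $u_{def}(s,\alpha)$ against $u_{def}(s',\alpha')+p_{j+1}$. The delicate part is establishing that this single adjacent swap is sufficient, i.e. that no longer chain of reallocations, and no choice that deliberately keeps several workers simultaneously below the cap, can do better, so that the greedy fill attains the per-iteration optimum for every $\tau$. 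I would attack this by showing that any optimal fill for a fixed $(i,\ell^*)$ can be brought to the algorithm's form by a sequence of surviving-utility-preserving adjacent swaps of exactly the type detected in Steps 7--10; combining this per-iteration optimality with the outer maximization over $(i,s_i)$ then yields $u_{def}(s^{\mathrm{alg}}) \ge u_{def}(s^*)$, which with feasibility gives equality and establishes that Algorithm~\ref{alg:deterministic} returns an optimal SSE commitment.
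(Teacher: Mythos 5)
Your setup is sound and consistent with the paper: the reduction to one worker per task (Proposition~\ref{prop:oneEachDeterministic}), the prefix property (Proposition~\ref{th:optDet}), the observation that the attacker removes the $\tau$ largest products $s_w p_w$ so that the defender's equilibrium payoff is exactly the sorted sum $\sum_{k=1}^{n-\tau}\Upsilon_k$ computed by the algorithm, and the anchoring argument (it suffices to show that the iteration with $i=w^*$, $s_i=\ell^*$ produces an assignment at least as good as $s^*$) are all correct and mirror how the paper uses Proposition~\ref{prop:mostProficient}. The genuine gap is that the entire weight of the theorem rests on the claim you yourself flag as ``the delicate part'': that for a fixed $(i,s_i)$, the greedy fill of Steps 5--11, corrected only by the single adjacent swap of Steps 7--10, maximizes the surviving utility over \emph{all} cap-, budget- and capacity-feasible fills. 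At that point you only restate the goal (``I would attack this by showing that any optimal fill \ldots can be brought to the algorithm's form by a sequence of surviving-utility-preserving adjacent swaps''). That is not an argument: whether such swaps preserve surviving utility, and whether adjacent swaps of the Steps 7--10 type suffice to connect an arbitrary optimal fill to the algorithm's fill, is precisely what must be proved. The difficulty is real for $\tau\ge 2$, where moving one task changes both the total value and the identity of the $\tau$ attacked workers, so a swap that lowers total value can still raise surviving value; your $\tau=1$ observation (attacked value pinned at $V^*$, total value monotone under cap-respecting shifts) does not extend, and nothing in the proposal rules out fills that keep several workers strictly below their caps.

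For comparison, the paper closes exactly this hole by brute force rather than by a greedy-exchange lemma: it assumes a strictly better assignment $s'$ exists, decomposes the difference from the algorithm's output into single task moves, and enumerates all sixteen combinations of target/non-target status of the two affected workers before and after a move, using Proposition~\ref{prop:mostProficient} (the outer loop tries every load for the attacked worker) and Proposition~\ref{prop:moreBecomesTarget} (a non-target more proficient worker cannot absorb a task without becoming a target) to dispose of the subtle cases where the attacked set changes. So completing your proposal would in effect require reconstructing that case analysis (or a genuinely new packing argument) inside your per-iteration lemma; as written, the proposal stops exactly where the proof becomes hard.
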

\begin{proof}
	Assume in negation that there exists some assignment $ s'  \neq s$ and its corresponding attack strategy, $ \alpha' $, such that $ u_{def}(s',\alpha') > u_{def}(s,\alpha) $.
	Specifically, there exist $ 16 $ different ways to make a single change in an assignment (as detailed below). For each of these changes, we prove that such assignment is not possible or contradicts the above assumption (i.e., do not improve the defender's expected utility). For readability of the proof, we denote $ w_i^s $ as worker $ i $ under assignment $ s $.
	\begin{enumerate}
		\item \textbf{Move a task from a non-target to a non-target }$ w_1^s \notin T $, $ w_1^{s'}\notin T $, $ w_2^s\notin T $, $ w_2^{s'} \notin T $: If $ p_1 \geq p_2 $, this change can either leave the utility as is (if $ p_1 = p_2 $) or decrease it (assign a task to a less proficient worker), hence contradicting the assumption. Otherwise, if $ p_1 < p_2 $, according to Proposition \ref{prop:moreBecomesTarget}, a more proficient non-target cannot be assigned with additional task and remains non-target. 
		\item $ w_1^s \notin T $, $ w_1^{s'}\notin T $, $ w_2^s\notin T $, $ w_2^{s'} \in T $:  Moving a task from a non-target worker to a less proficient non-target (i.e., if $ p_1 \geq p_2$) cannot make the less proficient worker a target ($ w_2^{s'} \in T $ is not possible in this case). Otherwise, if $ p_1 < p_2 $, $ p_2 $ will become  the worker associated with the highest utility under $ s' $ following this change. Still, if this assignment resulted in an higher utility for the defender, this was the output of Algorithm 1. Since this is not the output, $ u_{def}(s,\alpha) \geq u_{def}(s',\alpha) $, contradicting the assumption.
		
		\item $ w_1^s \notin T $, $ w_1^{s'}\notin T $, $ w_2^s\in T $, $ w_2^{s'} \notin T $: A worker that is currently being attacked cannot be assigned with an additional task and not be attacked anymore.
		\item $ w_1^s \notin T $, $ w_1^{s'}\notin T $, $ w_2^s\in T $, $ w_2^{s'} \in T $: Moving a task from a non-target worker to a target will only reduce the utility of the defender, contradicting the assumption.
		\item $ w_1^s \notin T $, $ w_1^{s'}\in T $, $ w_2^s\notin T $, $ w_2^{s'} \notin T $: A non-target worker cannot give a task and become a target.
		\item $ w_1^s \notin T $, $ w_1^{s'}\in T $, $ w_2^s\notin T $, $ w_2^{s'} \in T $: A non-target worker cannot give a task and become a target.
		\item $ w_1^s \notin T $, $ w_1^{s'}\in T $, $ w_2^s\in T $, $ w_2^{s'} \notin T $: A non-target worker cannot give a task and become a target.
		\item $ w_1^s \notin T $, $ w_1^{s'}\in T $, $ w_2^s\in T $, $ w_2^{s'} \in T $: A non-target worker cannot give a task and become a target.
		\item $ w_1^s \in T $, $ w_1^{s'}\notin T $, $ w_2^s\notin T $, $ w_2^{s'} \notin T $: Since $ w_2 $ assigned with an additional task and still not attacked, it must be the least proficient assigned worker. Thus, the new target (instead of $ w_1 $) is some other worker that results in the highest utility for the defender. According to Proposition~\ref{prop:mostProficient}, if this step was beneficial, it was resulted by Algorithm 1 that considers each worker as $ w^{max} $.
		\item $ w_1^s \in T $, $ w_1^{s'}\notin T $, $ w_2^s\notin T $, $ w_2^{s'} \in T $: If $ p_2 \geq p_1 $ this change make $ w_2 $ the worker who contributes most. According to Proposition~\ref{prop:mostProficient}, if this step was beneficial, it was resulted by Algorithm 1 that considers each worker as $ w^{max} $.
		Otherwise, if $ p_1 > p_2$, $ s_1  < s_2 $. The gain from this switch is $ (s_1-1)p_1-s_2p_2 $. This gain will be positive only if $ s_1p_1-p_1 > s_2p_2 $. Still, this is only possible if $ w_2 $ is the least proficient worker assigned (the difference of any other worker $ k $ from being a target is at most $ p_k $). If this is the least proficient assigned worker and becomes the target, it implies that  $ w_2 $ was assigned with the maximal amount of tasks such that it is not a target under $ s $. Hence, $ s2p2> s_1p_1-p_1 $, contradicting the assumption.
		\item $ w_1^s \in T $, $ w_1^{s'}\notin T $, $ w_2^s\in T $, $ w_2^{s'} \notin T $: A worker that is currently being attacked cannot be assigned with an additional task and not be attacked anymore.
		\item $ w_1^s \in T $, $ w_1^{s'}\notin T $, $ w_2^s\in T $, $ w_2^{s'} \in T $: Since $ w_1 $ is no longer a target (and $ w_2 $ was initially a target), there exists some other worker $ w' $ that prior to the change was not a target but becomes one due to the change (i.e, currently contributes more than $ w_1^{s'} $). Hence, the defender's utility decreases due to this change, contradicting the assumption. 
		\item $ w_1^s \in T $, $ w_1^{s'}\in T $, $ w_2^s\notin T $, $ w_2^{s'} \notin T $: Assigning another task to a non-target and keeping it a non-target is only possible if  $ w_2 $ is the least proficient worker assigned (otherwise, this worker will become a target). Still, the difference between the utility resulted from each assigned worker $ k $ (beside maybe the least proficient worker) and the worker who contributes the most, $ w^{max} $, is at most $ p_k $. By reducing the number of tasks assigned to $ w_1 $, its difference from $ w^{max} $ becomes more than $ p_k $ for some worker $ k \in W $. Hence, there is no way that $ w_1 $ is assigned with one less task and still a target.
		\item $ w_1^s \in T $, $ w_1^{s'}\in T $, $ w_2^s\notin T $, $ w_2^{s'} \in T $: $ w_2 $ becomes a target instead of some other worker. There are two possible cases: 1) $ w_2 $ is the least proficient worker assigned. Note that the difference between the utility resulted from each assigned worker $ k $ (beside maybe the least proficient worker) and the worker who contributes the most, $ w^{max} $, is at most $ p_k $. By reducing the number of tasks assigned to $ w_1 $, its difference from $ w^{max} $ becomes more than $ p_k $ for some worker $ k \in W $. Hence, there is no way that $ w_1 $ is assigned with one less task and still a target. 2) $ w_2 $ is some other worker. This means, that $ p_2 = w^{max} $ (under the new assignment). Still, if this assignment resulted in an higher utility for the defender, this was the output of Algorithm 1. Since this is not the output, $ u_{def}(s,\alpha) \geq u_{def}(s',\alpha) $, contradicting the assumption.
		\item $ w_1^s \in T $, $ w_1^{s'}\in T $, $ w_2^s\in T $, $ w_2^{s'} \notin T $: A worker that is currently being attacked cannot be assigned with an additional task and not be attacked anymore.
		\item $ w_1^s \in T $, $ w_1^{s'}\in T $, $ w_2^s\in T $, $ w_2^{s'} \in T $: Since both workers remain targets, $ u_{def}(s,\alpha) = u_{def}(s',\alpha') $, contradicting the assumption.
	\end{enumerate}
	Finally, since no single change is shown to be profitable from the defender's point of view, and any possible change in assignments can be represented as a set of single changes, we conclude that Algorithm 1 computes an optimal SSE commitment when tasks are homogeneous.
\end{proof}

\leaveout{
	Given this best allocation, we analyze the role of environment's different parameters on the defender's expected utility. First, we analyze how changing the number of available tasks ($ m $) affects the defender's expected utility. We simulated a set of environments with $ 10 $ workers and changed that number of tasks in each from $ 2 $ to $ 50 $. Once again, the results were averaged over $ 20,000 $ runs, each with a different workers' proficiencies sampled from a truncated Uniform distribution between $ 0.5 $ and $ 1 $.
	\begin{figure}[htb]
		\centering
		\begin{subfigure}{0.25\textwidth}
			\centering
			\resizebox{\height}{4cm}{
				\includegraphics[width=1.3\linewidth]{DeterministicTasks}
			}
			\caption{Expected utility (per task)}
			\label{fig:deterministictasks}
		\end{subfigure}%
		\begin{subfigure}{0.25\textwidth}
			\centering
			\resizebox{\height}{4cm}{
				\includegraphics[width=1.3\linewidth]{DeterministicTasksAssigned}
			}
			\caption{Assigned workers}
			\label{fig:deterministictasksassigned}
		\end{subfigure}
		\caption{Assignment - Tasks availability}
	\end{figure}
	
	Figure~\ref{fig:deterministictasks} visualizes the defender's expected utility per task as a function of the number of tasks. As depicted in the figure, from a certain point ($ 15 $ tasks) the defender's expected utility (per task) stabilize and does not change although the number of tasks increases. As for the number of workers assigned, Figure~\ref{fig:deterministictasksassigned} visualize the number of workers that the defender assigns as a function of the number of tasks. As depicted in the figure, we observe a similar pattern in which from a certain point onward ($ 20 $ tasks), the number of assigned worker becomes quite steady.
	
	Next, we turn to analyze the effect of the number of workers available to the defender ($ n $) and followed the same experimental design as for the randomized assignment. Figure~\ref{fig:Deterministic} depicts the expected utility (Figure~\ref{subfig:deterministicutility}) and the number of assigned workers (Figure~\ref{subfig:deterministicass} as a function of the number of available workers. In this case, as well, we show that a more proficient set of workers (sampled from the Uniform distribution) will result in a higher expected utility than the less proficient ones (sampled from the Power law one). Still, in contrast to our finding for the randomized assignment, in this case, more workers are assigned in the setting with the more proficient workers.  
	\begin{figure}[htb]
		\centering
		\begin{subfigure}{0.25\textwidth}
			\centering
			\resizebox{\height}{4cm}{
				\includegraphics[width=1.3\linewidth]{DeterministicUtililty}
			}
			\caption{Expected utility}
			\label{subfig:deterministicutility}
		\end{subfigure}%
		\begin{subfigure}{0.25\textwidth}
			\centering
			\resizebox{\height}{4cm}{
				\includegraphics[width=1.3\linewidth]{DeterministicAss}
			}
			\caption{Assigned workers}
			\label{subfig:deterministicass}
		\end{subfigure}\hfil
		\caption{Assignment - Workers' availability}
		\label{fig:Deterministic}
	\end{figure}
}

\leaveout{
\subsection{Randomized strategy} 

Next, we consider the optimal randomized assignment. 
In general, a randomized allocation involves a probability distribution over all possible matchings with cardinality $m$ between tasks and workers. 
This strategy results in \emph{ex-ante robustness} for the defender against the adversary's attack or a failure of a single worker to provide with answers to the allocated tasks.
Our first several results are similar to the observations we made for assignments but require different arguments.
We first observe that this space can be narrowed to consider only matchings in which one worker is assigned to any task.
\begin{prop}\label{prop:oneEach} 
	Suppose that tasks are homogeneous. 
	There exists a Stackelberg equilibrium in which the defender commits to a randomized strategy with all assignments in the support assigning at most one worker per task.
\end{prop} 
\begin{proof} 
	Consider an optimal randomized strategy commitment restricted to assign at most one worker per task, and the associated Nash equilibrium (which exists, by equivalence of Stackelberg and Nash in zero-sum games~\cite{Korzhyk11}).
	We now show that this remains an equilibrium even in the unrestricted space of assignments for the defender.
	
	We prove by contradiction.
	Suppose that there is $s$ which assigns multiple workers for some tasks and is strictly better for the defender.
	Consider an arbitrary attack $\alpha$ in the support of $R$.
	Given $s$, suppose that there is some task $t$ assigned to $k \ge 2$ workers.
	Since only $m$ assignments can be made, there must be $k-1$ tasks which are not assigned.
	If any of these workers are attacked, then moving this worker to another task will not change the defender's utility.
	Thus, WLOG, suppose none of the workers is attacked, and consider moving $k-1$ of these to unassigned tasks; let this be $s'$.
	Under $s$, the marginal utility of the $k$ workers completing their assigned task $t$ is at most $u_t$.
	Under $s'$, the marginal utility of these workers is $\sum_{i=1}^k p_iu_t \ge u_t$, since $p_i \ge 0.5$.
	Thus, $s'$ is weakly improving.
	Since this argument holds for an arbitrary $\alpha$ in the support of $R$, the resulting $s'$ must also be weakly improving given $R$.
	Since $s$ is a strict improvement on the original Nash equilibrium strategy of the defender, then $s'$ must be as well, which means that this could not have been a Nash equilibrium, leading to a contradiction.
	The result then follows from the known equivalence between Nash and Stackelberg equilibria in zero-sum games.
\end{proof}
As a consequence of this proposition, it suffices to consider assignment policy in which each task is assigned to a single worker, and all $m$ tasks are assigned.
Since there are $m$ tasks, an assignment is then the split of these among the workers.
Consider the unit simplex in $\R^n$, $\Delta = \{x|\sum_w x_w = 1\}$ which represents how we split up tasks among workers.
It is then sufficient to consider the space of assignments $S$ where $x \in \Delta$ means that each worker receives $s_w = mx_w$ tasks, with the constraint that all $mx_w$ are integers; i.e., $S = \{mx|x \in \Delta, mx \in \mathbb{Z}_+\}$.

A randomized allocation, in general, is a probability distribution $q$ over the set of assignments $S$.
The way we can compute it in general is by using the following linear program.
\begin{subequations}
		\label{E:randomizedConst}
		\begin{align}
		& \max_{Q} \displaystyle\sum\limits_{s\in S} q_s\sum_{w\in W}\sum\limits_{t\in T} s_{wt} u_t  p_{w} - \gamma  \label{objectiveRHC}\\
		& s.t.: 
		\gamma \ge \sum\limits_{s\in S} q_s\sum\limits_{t\in T} s_{wt}u_t  p_{w}, \forall w \in W \label{cons:maxRHC}\\
		&\sum_{s \in S} q_s = 1 \label{cons:sumsToOne}.
		\end{align}
\end{subequations}
The objective (\ref{objectiveRHC}) aims to maximize the defender's expected utility given the adversary's attack (second term). 
Constraint (\ref{cons:maxRHC}) validates that the adversary's target is the worker who contributes the most to the defender's expected utility, 
and Constraint~\eqref{cons:sumsToOne} ensures that the allocation is a valid probability distribution. 

Unfortunately, the linear program above has an exponential number of variables (all possible assignments).
For example, even for only $14$ workers and $14$ tasks, there are over 20 million possible assignments in $S$. 
However, we now show that under a relatively mild restriction, we can compute the optimal randomized allocation in \emph{linear time}.
Specifically, suppose that $c_w \ge m$ for all workers $w$.
Indeed, many tasks in crowdsourcing, such as image labeling, are ``micro-tasks'', where a single worker commonly completes dozens of these~\cite{good2013crowdsourcing,de2014crowdsourcing}.
We now observe that under this assumption, we can restrict
attention to a far more restricted space of \emph{unit assignments}, $\tilde{S} = m\{e_w\}_{w \in W} \subset S$, where $e_w$ is a unit vector which is $1$ in $w$th position and $0$ elsewhere; i.e., assigning a single worker to all tasks.
Let $\lambda$ denote a distribution over $\tilde{S}$.

\begin{prop}\label{prop:rep}
     Suppose that $c_w \ge m$ for all $w \in W$.
	For any distribution over assignments $q$ and attack strategy $\alpha$, there exists a distribution over $\tilde{S}$, $\lambda$, which results in the same utility.
\end{prop}
\begin{proof}
	Fix an attacker strategy $\alpha$.
	For any probability distribution $q$ over $S$, the expected utility of the defender is $u_{def}(q,\alpha) = \sum_{w\in W} p_w (1-\alpha_w) \sum_{s \in S} q_s s_w$, where $s_w$ is the number of tasks assigned to worker $w$.
	The expected utility of the defender for a distribution $\lambda$ over $\tilde{S}$ is $u_{def}(\lambda,\alpha) = m\sum_{w\in W} p_w (1-\alpha_w) \lambda_w$.
	Define $\lambda_w = \frac{1}{m}\sum_{s \in S} q_s s_w$.
	It suffices to show that $\sum_{w\in W} \lambda_w = 1$.
	This follows since $\sum_{w\in W} \sum_{s \in S} q_s s_w = \sum_{s \in S} q_s \sum_{w\in W} s_w = m\sum_{s \in S} q_s = m$ because $\sum_{w\in W} s_w = m$ and $q$ is a probability distribution.
\end{proof}
This result allows us to restrict attention to probability distributions over $\tilde{S}$ (e.g., $14$ allocations rather than 20,058,300 for the case of $14$ tasks and $14$ workers).

Next, we make another important observation which implies that in an optimal randomized assignment the support of $\lambda$ must include the best $k$ workers for some $k$.
Below, we use $i$ as the rank of a worker in a decreasing order of proficiency.
\begin{prop}\label{prop:optRand}
Suppose that $c_w \ge m$ for all $w \in W$ and suppose that $\lambda_i > 0$ for $i > 1$ in an optimal randomized assignment $\lambda$.
	Then there must be an optimal assignment in which $\lambda_{i-1} > 0$.
\end{prop}
\begin{proof}
	It is useful to write the utility of the defender as \\$\sum_{t\in T}u_t\sum_{i\notin W'} \lambda_i p_i $, where $W'$ is the set of workers being attacked.
	Suppose that $ \lambda $ is an optimal randomized assignment, and there exist some worker $ i$, s.t. $ \lambda_i>0 $ and $ \lambda_{i-1} = 0 $. 
	Since $\lambda_ip_i > 0$, there is $\epsilon > 0$ such that $(\lambda_i - \epsilon)p_i > \epsilon p_{i-1}$.
	First, suppose that $ i $ is not being attacked (i.e, $ i \notin W' $). Obviously, after $\epsilon$ was removed from the probability of assigning to $i$, the set of attacked worker do did not change, and the defender receives a net gain of $\epsilon(p_{i-1} - p_i) \ge 0$.
	Thus, if $\lambda$ was optimal, so is the new assignment.
	Now, suppose that $i \in W'$.
	If $i$ is still being attacked after $\epsilon$ is moved to $i-1$, the defender obtains a non-negative net gain as above.
	If instead this change results in some other $j \ne i$ now being attacked, the defender obtains another net gain of $(\lambda_i-\epsilon)p_i + \epsilon p_{i-1} - \lambda_j p_j = (\lambda_i p_i - \lambda_j p_j) + \epsilon(p_{i-1} - p_i) \ge 0$, by optimality condition of the attacker and the fact that $p_{i-1} \ge p_i$.
	Again, if $\lambda$ was optimal, so is the new assignment.
\end{proof}
The final piece of structure we observe is that in an optimal randomized assignment the workers in the support must have the same utility for the adversary.
Define $W_t(\lambda) = \{w \in W|\lambda_w > 0\}$, i.e., the workers in the support of a strategy $\lambda$.
\begin{theorem}\label{prop:balance}
Suppose that $c_w \ge m$ for all $w \in W$.
	Then there exists an optimal randomized assignment with $\lambda_w p_w = \lambda_{w'} p_{w'}$ for all $w,w' \in W_t(\lambda)$.
\end{theorem}
\begin{proof}
	Suppose that an optimal $\lambda$ has two workers $w,w' \in W_t(\lambda)$ with $\lambda_w p_w > \lambda_{w'} p_{w'}$.
	Define $u_{max} = \max_{w \in W_t(\lambda)} \lambda_w p_w$ and $u_{min} = \min_{w \in W_t(\lambda)} \lambda_w p_w$.
	Let $W_{max}$ be the set of maximizing workers (with identical marginal value to the attacker, $u_{max}$), and let $z$ be some minimizing worker.
	Define $K = |W_{max}|$.
	By optimality of the attacker, some $w \in W_{max}$ is attacked and by our assumption $u_{max} > u_{min}$.
	For any $w \in W_{max}$, define $\bar{p}_w = 1/p_w$ and similarly let $\bar{p}_z = 1/p_z$.
	
	First, suppose that $\bar{p}_z < \frac{1}{K-1} \sum_{w \in W_{max}} \bar{p}_w$.
	Then there is some $w \in W_{max}$ with $\bar{p}_z < \bar{p}_w$ or, equivalently, $p_z > p_w$.
	Then there exists $\epsilon > 0$ small enough so that if we change $\lambda_z$ to $\lambda_z + \epsilon$ and $\lambda_w$ to $\lambda_w-\epsilon$ the attacker does not attack $z$, and we gain $\epsilon (p_z - p_w)$ and either lose the same as before to the attack (if $K>1$) or lose less (if $K=1$).
	Consequently, $\lambda$ cannot have been optimal, and this is a contradiction.
	
	Thus, it must be that $\bar{p}_z \ge \frac{1}{K-1} \sum_{w \in W_{max}} \bar{p}_w$.
	Suppose now that we move all of $\lambda_z$ from $z$ onto all workers in $W_{max}$, maintaining their utility to the attacker as constant (and thus the attacker does not change which worker is attacked).
	For any worker $w \in W_{max}$, the resulting $\lambda_w' = \lambda_w p_w + C$, where $C = \epsilon_w p_w$.
	Moreover, it must be that $\sum_{w \in W_{max}} \epsilon_w = \lambda_z$.
	Since $\epsilon_w = C/p_w$, we can find that $C = \frac{\lambda_z}{\sum_{w \in W_{max}} \bar{p}_w}$.
	Consequently, the defender's net gain from the resulting change is
	\[
	(K-1) \frac{\lambda_z}{\sum_{w \in W_{max}} \bar{p}_w} - \lambda_z p_z = \lambda_z \left(\frac{(K-1)}{\sum_{w \in W_{max}} \bar{p}_w} - p_z\right),
	\]
	which is non-negative since $\bar{p}_z \ge \frac{1}{K-1} \sum_{w \in W_{max}} \bar{p}_w$.
	We can then repeat the process iteratively, removing any other workers $z$ in the support but not in $W_{max}$ to obtain a solution with uniform $\lambda_w' p_w$ for all $w$ with $\lambda_w' > 0$ which is at least as good as the original solution $\lambda$.
\end{proof}

Algorithm~\ref{alg:DoSRand} uses these insights for computing an optimal randomized assignment in linear time when workers have large capacities.
\begin{algorithm}[hbt]
	\caption{Randomized homogeneous assignment} \label{alg:DoSRand} 
	\begin{flushleft}
		\textbf{Input}  The set of workers $W$, and their proficiencies $P$\\
		\textbf{Output} The optimal randomized policy $ \lambda^* $
	\end{flushleft}
	\begin{algorithmic}[1]
		\State \textbf{Initialize}: $ u_{max} \leftarrow 0 $, $  v \leftarrow \frac{1}{p_1} $
		\For {$ k \in \{\tau+1,\ldots,n\} $}
		\State $ v \leftarrow v + \frac{1}{p_k} $
		\If{$\frac{k-\tau}{v} > u_{max}$}
		\State $u_{max} \leftarrow \frac{k-\tau}{v}$, $k^* \leftarrow k$, $ v^*\leftarrow v $
		\EndIf
		\EndFor
		\For {$ i \in \{1,\ldots,k^*\}$}
		\State $ \lambda_i^* \leftarrow  \frac{1}{p_i v^*} $
		\EndFor\\
		\Return  $ \lambda^*$
	\end{algorithmic}
\end{algorithm}

\begin{algorithm}[hbt]
	\caption{Randomized homogeneous assignment} \label{alg:DoSRandN} 
	\begin{flushleft}
		\textbf{Input}  The set of workers $W$, and their proficiencies $P$\\
		\textbf{Output} The optimal randomized policy $ \lambda^* $
	\end{flushleft}
	\begin{algorithmic}[1]
		\State \textbf{Initialize}: $ u_{max} \leftarrow 0 $, $  v \leftarrow \frac{1}{p_1} $
		\For {$ k \in \{\tau+1,\ldots,n\} $}
		\State $ v \leftarrow v + \frac{1}{p_k} $
		\If{$\frac{k-\tau}{v} > u_{max}$}
		\State $u_{max} \leftarrow \frac{k-\tau}{v}$, $k^* \leftarrow k$, $ v^*\leftarrow v $
		\EndIf
		\EndFor
		\For {$ i \in \{1,\ldots,k^*\}$}
		\State $ \lambda_i^* \leftarrow  \frac{1}{p_i v^*} $
		\EndFor\\
		\Return  $ \lambda^*$
	\end{algorithmic}
\end{algorithm} 
At the high level, it attempts to compute the randomized assignments for all possible $k (> \tau)$ most proficient workers who can be in the support of the optimal assignment, and then returns the assignment which yields the highest expected utility to the defender.
For a given $k$, we can find $\lambda_i$ directly for all $ i \in \{1,\ldots,k\}$: $\lambda_i = \frac{1}{p_i \sum_j \frac{1}{p_j}}$.
Consequently, the utility to the defender of an optimal randomized assignment for $k$ workers is $\frac{k-\tau}{\sum_j \frac{1}{p_j}}$ (since $ \tau $ workers are attacked, and it does not matter which ones).

\subsection{Experiments}

We now experimentally consider three questions: 1) what is the impact of the distribution of worker proficiencies on the defender's utility and the number of workers assigned? 2) how the assignment performs compared to some baseline assignment techniques 
and 3) what is the expected loss of using either randomized or deterministic optimal robust assignment in a non-adversarial environment (compared to an optimal non-adversarial allocation)?
Workers' proficiencies are sampled using three different distributions: 
a uniform distribution over the $[0.5,1]$ interval, a power law distribution with $k=0.5$ and an exponential distribution with $ \mu = 0.5 $ where proficiencies are truncated to be in this interval for the latter two distributions.
We use 100 tasks, and vary the number of workers between 2 and 50.\footnote{When we vary the number of workers, we generate proficiencies incrementally, adding a single worker with a randomly generated proficiency each time.}
For each experiment, we take an average of 20,000 sample runs.

\begin{figure}[htb]
	\centering
	\begin{subfigure}{0.25\textwidth}
		\centering
		\includegraphics[width=0.8\linewidth]{DeterministicUtililty}
		\caption{Utility}
		\label{subfig:deterministicutility}
	\end{subfigure}%
	\begin{subfigure}{0.25\textwidth}
		\centering
		\includegraphics[height=2.5cm,width=1\linewidth]{DeterministicAss}
		\caption{Assignment}
		\label{subfig:deterministicass}
	\end{subfigure}\hfil
	\caption{\small Comparison between uniform and power law distributions of worker proficiencies.}
	\label{fig:distribution}
\end{figure}
In Figure~\ref{fig:distribution} we compare the uniform and power law distributions in terms of the expected defender utility and the number of workers assigned any tasks in an optimal 
deterministic (Figure~\ref{fig:distribution}a-b) assignment when the number of targets is $ \tau = 1 $. 
Consistently, under the power law distribution of proficiencies, the defender's utility is lower, and fewer workers are assigned tasks in an optimal assignment.

Next, we experimentally compare the deterministic assignment to a broad of baselines. We focus on two natural baselines, Split-$ k $ and Monte-Carlo.
Specifically, for the Split-$ k $ method, we divide tasks equally among the top $ k $ workers. For the Monte-Carlo approach, we consider a simple variant which randomly distributes tasks among all the workers, denoted \emph{Monte-Carlo} and a more advanced variant which randomly distributes the tasks among the top $ n/2 $ worker \emph{Top Monte-Carlo}. In all cases, the randomized worker was picked based on a uniform distribution. As Figure~\ref{fig:Baselines} shows, for both the uniform and exponential distributions we observe the similar trend in which our algorithm performs significantly better than all baselines, especially when the number of targets increases. 
\begin{figure}[hbtp]
	\centering
	\begin{subfigure}{0.5\linewidth}
	\centering
	\includegraphics[height=3cm,width=1\linewidth]{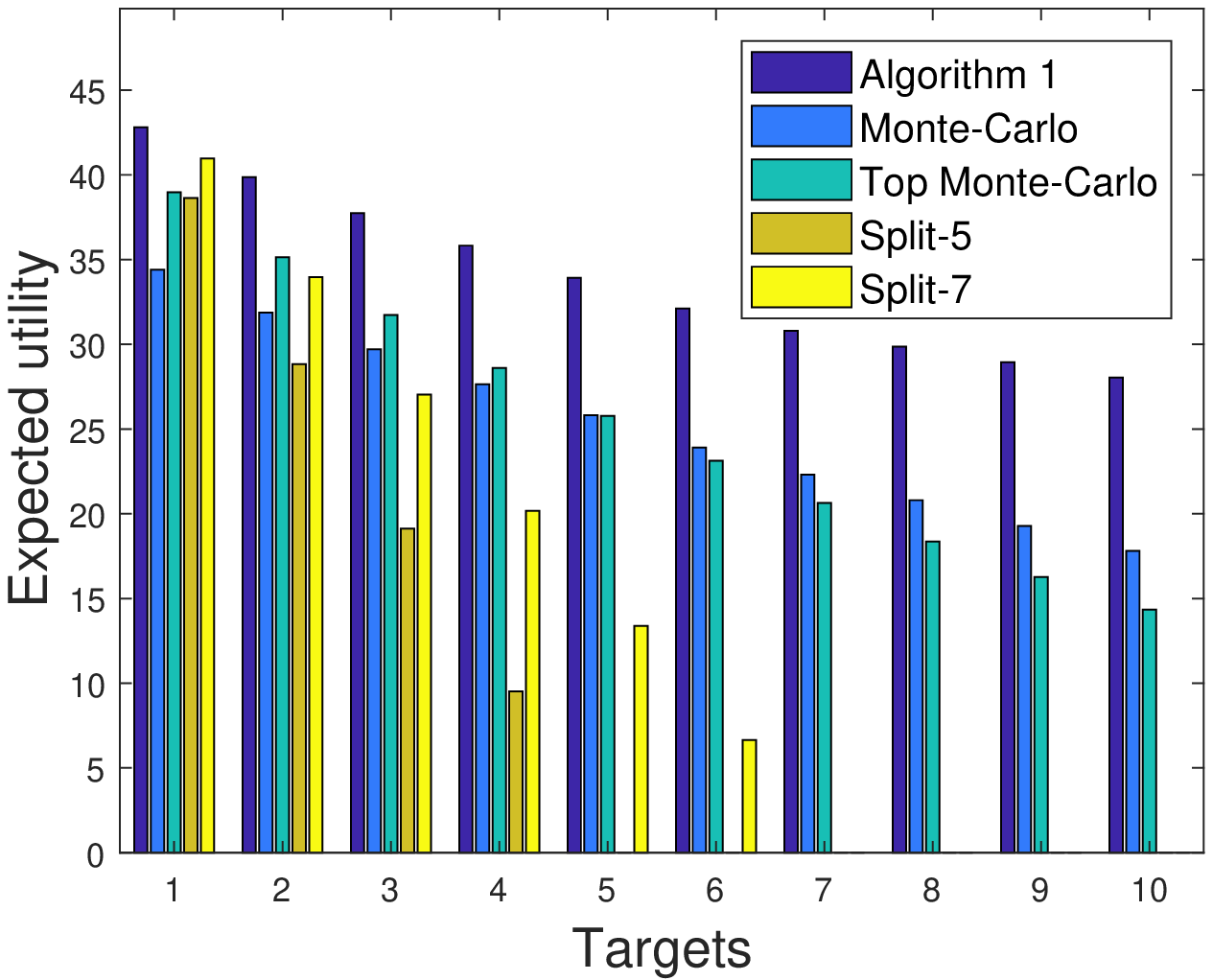}
	\caption{\small Uniform distribution}
	\label{fig:uniform}
	\end{subfigure}%
	\begin{subfigure}{0.5\linewidth}
	\centering
	\includegraphics[height=3cm,width=1\linewidth]{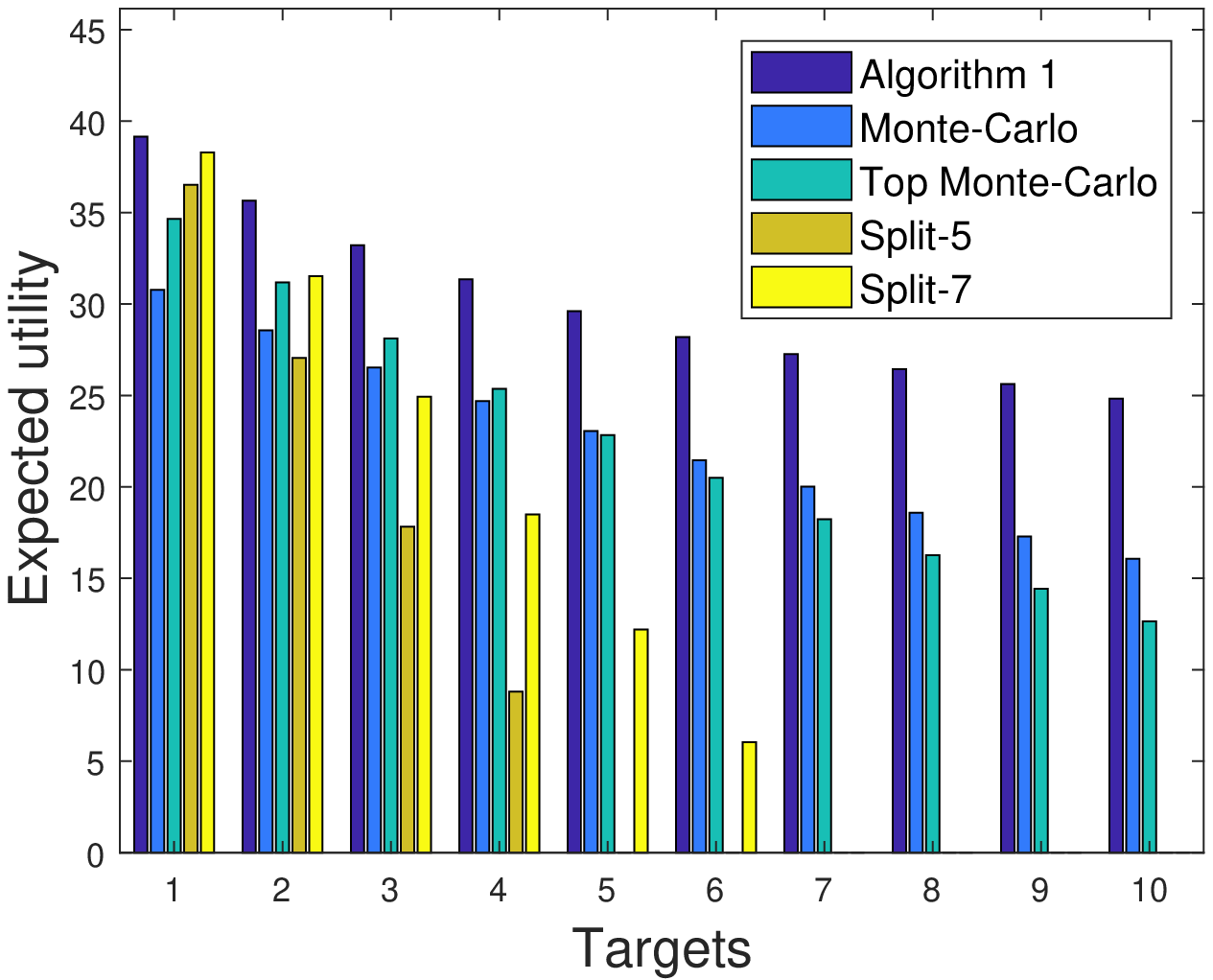}
	\caption{\small Exponential distribution}
	\label{fig:uniform}
	\end{subfigure}%
	\caption{Assignments' comparison}
	\label{fig:Baselines}
\end{figure}
\leaveout{
Next, we experimentally compare the optimal randomized and deterministic assignments in terms of (a) defender's utility, and (b) the number of workers assigned tasks.
\myworries{ Replace: In this case, we only show the results for the uniform distribution over worker proficiencies.}
\myworries{With: In this case, we only show the results for the case where $ \tau = 1 $ relating to the common analysis of $ N-1 $ robustness. We note that we observe similar results for cases where $ \tau > 1 $.}
It is, of course, well-known that the optimal randomized assignment must be at least as good as deterministic (which is a special case), but the key question is by how much.
As Figure~\ref{fig:ratio} shows, the difference is quite small: always below 5\%, and decreasing as we increase the number of tasks from 10 to 200.
\myworries{Similarly, in term of number of assigned workers, the actual policies are not so different in nature: the number of most proficient workers are assigned to tasks in the deterministic case is higher than the one on the randomized case by only $ 4.6\% $ on average.}
\begin{figure}[hbtp]
	\centering
		\centering
		\includegraphics[height=3cm,width=1\linewidth]{homgenousComparison}
		\caption{\small Improvement ratio - Randomized vs. Deterministic.}
		\label{fig:ratio}
\end{figure}
The implication of this observation is that from the defender's perspective it is not crucial to know precisely what the adversary observes about the assignment policy. 
While in general Stackelberg games randomized commitment is much better than deterministic~\cite{Letchford14}, in our setting the deterministic assignment is nearly as good as randomized, while requiring weaker assumptions on the adversarial knowledge.
}

We conclude this section by analyzing the loss incurred by allowing for robustness, compared to a solution which is optimal in non-adversarial settings.
We vary the number of workers from 2 to 50 in an environment with 100 tasks.
\myworries{Choose either table of figure, based on space}
Table~\ref{tab:expectedLoss} shows the expected loss of using the deterministic and randomized strategies in a non-adversarial settings.
We can see that the price of robustness is well under 20\%, and decreases with problem size.
\begin{table}[hbtp]
	\centering
	\resizebox{\linewidth}{!}{\begin{tabular}{|c|c|c|c|c|c|c|c|c|c|c|c|c|c|c|}
			\hline 
			&\multicolumn{14}{|c|}{Workers} \\ 
			\hline 
			Strategy& 2 &3&4&5&6&7&8&9&10&11&12&13&14&15\\\hline
			Det.& 11.63\%&    15.99\%& 16.57\% &16.01\%& 15.2\%& 14.19\%& 13.72\%& 13.29\%& 12.95\%&    12.45\%& 11.96\%& 11.37\% &10.84\%& 10.27\%   \\ 
			\hline 
			Rand.&11.75\%& 16.52\%& 17.26\%& 16.9\%& 15.54\%& 14.68\%& 13.94\%& 13.3\%& 12.82\%& 12.31\%& 11.91\%& 11.48\%& 11.17\%& 10.84\% \\ 
			\hline 
	\end{tabular} }
	\caption{Expected loss of using randomized or deterministic assignment in non-adversarial settings.}
	\label{tab:expectedLoss}
\end{table}

Figure~\ref{fig:expectedlossuc} depicts the expected loss of using the deterministic and randomized strategies in a non-adversarial settings.
We can see that the price of robustness is well under 25\%, and generally decreases with problem size.
\begin{figure}[hbtp]
	\centering
	\includegraphics[height=3cm,width=0.8\linewidth]{expectedLossUc}
	\caption{Expected loss of using randomized or deterministic assignment in non-adversarial settings.}
	\label{fig:expectedlossuc}
\end{figure}


}
\section{Heterogeneous tasks}
It turns out that the more general problem in which utilities are heterogeneous is considerably more challenging than the case of homogeneous allocation.
First, we show that even if the tasks' utilities are slightly different, it may be beneficial to assign the same task to multiple workers. Consider the case of an environment populated with 2 workers and 2 tasks. WLOG, we order the tasks by their utility, i.e., $u_{t_1} > u_{t_2}$. Regardless of the workers' proficiencies, assigning one worker per task will result in an expected utility of $\min (p_iu_{t_1}, p_ju_{t_2})$. On the other hand, assigning both workers to $t_1$ will result in an expected utility of $\min (p_iu_{t_1},p_ju_{t_1})$ which is always higher.
Aside from the considerably greater complexity challenge associated with solving problems with heterogeneous utilities suggested by this example, there is the additional challenge of incorporating (non-linear) decision rules into the optimization problem to resolving disagreement among workers, should it arise.
We begin by showing that if $B \leq m $, there is an optimal assignment in which only the $B$ tasks associated with the highest utility are included.
\begin{prop}
	Suppose that tasks are heterogeneous.
For any assignment $s$ there is a weakly utility-improving (i.e., results in the same or higher utility) assignment $s'$ for the defender which only assigns tasks from the set of  tasks with the $B$ highest utilities.	
\end{prop}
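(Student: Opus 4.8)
The plan is to prove this by a task-swapping exchange argument, combined with the observation that the defender's Stackelberg value is a pointwise minimum over attacker responses, so that a pointwise improvement against every fixed attack suffices. First I would record the budget consequence: since at most $B$ worker-to-task assignments are made and every task in $T'(s)$ consumes at least one assignment, $|T'(s)| \le B$. Let $\mathrm{Top}_B$ denote (a fixed choice of) the $B$ tasks of highest utility. If some $t \in T'(s)$ lies outside $\mathrm{Top}_B$, a counting argument produces an unassigned high-value task: because $t \in T'(s)\setminus \mathrm{Top}_B$, at most $B-1$ of the $B$ tasks in $\mathrm{Top}_B$ can belong to $T'(s)$, so some $t' \in \mathrm{Top}_B$ satisfies $t' \notin T'(s)$, and by the definition of $\mathrm{Top}_B$ we have $u_{t'} \ge u_t$.

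The core step is the swap: construct $s'$ from $s$ by moving the entire worker set $W_t(s)$ off $t$ and onto the empty task $t'$, leaving every other worker-task pair untouched. The key point is that the success probability $\Pr\{\delta(L_t(s)) = \iota_t\}$ appearing in \eqref{eq:basic} depends only on the set of workers assigned and their proficiencies (and, after an attack, on which of those workers are disabled), not on the identity of the task. Hence relocating the same workers from $t$ to $t'$ preserves this probability while replacing the coefficient $u_t$ by $u_{t'} \ge u_t$. Therefore, for every fixed attacker strategy $\alpha$, we obtain $u_{def}(s',\alpha) \ge u_{def}(s,\alpha)$: the contributions of all tasks other than $t$ and $t'$ are unchanged, $t$ contributes $0$ under $s'$, and the term on $t'$ equals $u_{t'}\cdot\Pr\{\cdot\} \ge u_t\cdot\Pr\{\cdot\}$, the lost contribution of $t$ under $s$.

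To pass from a fixed $\alpha$ to the equilibrium, I would invoke that the attacker best-responds by minimizing the defender's utility (the game is zero-sum), so the defender's realized payoff under an assignment is $\min_\alpha u_{def}(\cdot,\alpha)$. Since $u_{def}(s',\alpha) \ge u_{def}(s,\alpha)$ holds for every $\alpha$, taking $\min_\alpha$ on both sides preserves the inequality, giving $\min_\alpha u_{def}(s',\alpha) \ge \min_\alpha u_{def}(s,\alpha)$; that is, $s'$ is weakly utility-improving against a best-responding attacker. Each swap reduces the number of assigned tasks lying outside $\mathrm{Top}_B$ by exactly one, so iterating at most $|T'(s)|$ times yields an assignment using only tasks in $\mathrm{Top}_B$ that is weakly better than $s$.

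The step I expect to require the most care is the invariance claim: that the success probability is unchanged when a fixed worker set is relocated to a different task. This is immediate in the pure-completion model (where $\iota_t \equiv 1$ and $p_w$ is task-independent), but to state it cleanly for general label sets one must assume the decision rule $\delta$ and the per-worker success behavior are homogeneous across tasks. I would therefore make this invariance explicit as the single structural assumption the exchange argument rests on, and emphasize that the $\min_\alpha$ monotonicity is precisely what keeps the argument valid in the Stackelberg setting, rather than only against a fixed attack.
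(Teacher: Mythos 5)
Your proof is correct, and it reaches the result by a genuinely different route than the paper in the key step. The paper's proof shares your skeleton (counting argument to exhibit an unassigned task among the $B$ highest-utility ones, then an exchange, then iterate), but it restricts attention to single-worker-per-task assignments (dismissing the general case as a "straightforward" generalization), fixes the attacker's best response $\alpha$ to $s$, moves \emph{one worker} from the low-utility task to the vacant high-utility task, and then does a case analysis on how the attacker's target set changes (attacked worker stays attacked; attacker keeps the old targets; attacker switches to the moved worker), bounding a net gain in each case --- exactly in the style of Proposition~1. You instead relocate the \emph{entire} worker set $W_t(s)$ onto the vacant task $t'$, observe that this preserves the success probability under every fixed attack (since proficiencies and the rule $\delta$ are task-independent) while scaling the utility coefficient up from $u_t$ to $u_{t'}$, and conclude pointwise dominance $u_{def}(s',\alpha)\ge u_{def}(s,\alpha)$ for \emph{all} $\alpha$, after which monotonicity of $\min_\alpha$ finishes the Stackelberg argument. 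What your approach buys: it eliminates the best-response case analysis entirely (the step most prone to error in the paper's version --- note the paper's net-gain expressions require care about which worker ceases to be attacked), it handles multi-worker assignments directly rather than by deferral, and it makes explicit the single structural assumption (task-independence of success probabilities) on which the exchange rests. What the paper's approach buys: explicit per-case net-gain expressions consistent with its earlier propositions, which localize exactly where the attacker's response matters; but as a proof it is strictly narrower than yours.
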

\begin{proof}
	For readability, we assume that tasks are ordered based on their utility in decreasing order (i.e., $u_i \geq u_j, \forall i \leq j$), and that a single worker is assigned per task; generalization is straightforward.
	Consider an assignment $s$ and the corresponding best response by the attacker, $\alpha$, in which the set of workers $\bar{W}$ is attacked.
	Let a task $t_i$ be s.t. $ i > B $.
	Then there must be another task $t_j$, s.t. $ j \leq B $, which is unassigned.
	Now consider a worker $w \in W_{t_i}$.
	Since utility is additive, we can consider just the marginal utility of any worker $w'$ to the defender and attacker; denote this by $u_{w'}$.
	Let $T_{w'}$ be the set of tasks assigned to a worker $w'$ under $s$.
	Let $u_w = \sum_{t \in T_w} u_{wt}^M$, where $u_{wt}^M$ is the marginal utility of worker of $w$ towards a task $t$.
	
	Suppose that we reassign $w$ from $t_i$ to $t_j$.
	If $w \in \bar{W}$, the attacker will still attack $w$ (since the utility of $w$ to the attacker can only increase), and the defender is indifferent.
	If $w \notin \bar{w}$, there are two cases: (a) the attacker still attacks $\bar{W}$ after the change, and (b) the attacker now switches to attack $w$.
	Suppose the attacker still attacks $\bar{W}$.
	The defender's net gain is $p_{w}u_j - u_{w\bar{t}}^M \ge 0$.
	If, instead, the attacker now attacks $w$, the defender's net gain is $u_{w'} - u_w \ge 0$. Where $ w' $ is the worker that is not being attacked anymore.
\end{proof}
This allows us to restrict attention to the $B$ highest-utility tasks, and assume that $m=B$.


We now show that the defender's assignment problem, denoted \textit{Heterogeneous tasks assignment (HTA)}, is NP-hard even if we restrict the strategies to assign only a single worker per task.
\begin{prop}
	HTA is strongly NP-hard even when we assign only one worker per task.
\end{prop}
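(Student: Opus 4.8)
The plan is to give a polynomial-time reduction from \textsc{3-Partition}, which is strongly NP-hard, to the decision version of HTA: ``is there a single-worker-per-task assignment whose SSE value is at least $V$?'' Recall that a \textsc{3-Partition} instance consists of $3k$ positive integers $a_1,\dots,a_{3k}$ with $\sum_i a_i = kS$ and each $a_i \in (S/4, S/2)$, and asks whether the indices can be split into $k$ triples each summing to exactly $S$. Since the magnitudes here are polynomially bounded in the (unary) input size, a reduction that keeps all numbers polynomially bounded will establish \emph{strong} NP-hardness, not merely weak hardness.

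The construction I would use: create $k$ workers, each with proficiency $p_w = 1$ and capacity $c_w = 3$; create $3k$ tasks with utilities $u_i = a_i$; and set the budget $B = m = 3k$ and the number of attacked workers $\tau = 1$. The capacity $3$ forces any full assignment to place exactly three tasks on every worker, which is what ties the instance back to triples. Because $p_w=1$, the attacker's value of a worker $w$ equals the total utility $\sum_{t\in T_w} u_t$ assigned to it, so with $\tau=1$ the attacker simply removes the single most-loaded worker, and the defender's SSE value is $\big(\sum_{t\ \mathrm{assigned}} u_t\big) - \max_w \sum_{t \in T_w} u_t$.

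The equivalence I would then prove has two directions. First I would argue that every optimal defender strategy assigns all tasks: since the capacities total $3k = m$, any full assignment gives total value $kS$, while leaving tasks unassigned lowers the total by some $\delta>0$ yet keeps the maximum load at least the average, so by a one-line averaging bound the value is at most $(k-1)S - \delta(1-1/k) < (k-1)S$ for $k\ge 2$. Given that all tasks are assigned, the value equals $kS - \max_w \sum_{t\in T_w} u_t$ and the average load is exactly $S$, so $\max_w \sum_{t\in T_w} u_t \ge S$ always, with value $(k-1)S$ attainable iff every worker's load is exactly $S$. A balanced \textsc{3-Partition} yields such an assignment (the attacker removes one load-$S$ worker, leaving value $(k-1)S$); conversely, value $\ge (k-1)S$ forces every worker's load to equal $S$, and since capacity $3$ makes each worker's task set a triple, we recover a valid partition into triples of sum $S$.

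The main obstacle is exactly the ``$B\le m$ may leave tasks unassigned'' subtlety: I must rule out that a partial or deliberately unbalanced assignment beats $(k-1)S$, and this is where the averaging bound $\max_w \sum_{t\in T_w} u_t \ge (\text{total})/k$ does the real work. The remaining pieces are routine bookkeeping: confirming the attacker's best response is to delete the single heaviest worker (ties are irrelevant since the game is zero-sum), checking that capacity $3$ with $3k$ tasks on $k$ workers forces exactly three tasks per worker in any full assignment, and noting that because the utilities are precisely the (unary-encoded) \textsc{3-Partition} numbers, the established hardness is strong.
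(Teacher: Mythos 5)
Your proof is correct, but it takes a genuinely different route from the paper's. The paper reduces from \textsc{Bin Packing}: it creates one task per object (utility $v_i$), plus an extra task of utility $\gamma$ (the bin size) and an extra worker, gives all workers a common proficiency $p$, and asks whether value $pV$ is achievable, where $V=\sum_i v_i$. Both reductions rest on the same structural observation --- with identical proficiencies and $\tau=1$ the attacker deletes the maximum-load worker, so the defender's value is (total assigned utility) minus (maximum load), turning HTA into a load-balancing problem --- but the gadgets differ. The paper's extra task/worker pair pins the attacker's damage to exactly $p\gamma$ precisely when every other worker's load stays within $\gamma$, which is bin-packing feasibility; your reduction instead forces exact balance via the \textsc{3-Partition} size bounds $a_i \in (S/4,S/2)$ (and capacity $3$) together with the averaging bound $\max_w \sum_{t\in T_w} u_t \ge (\text{total})/k$. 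Your version is more careful on one point the paper glosses over: in the paper's converse direction it is simply asserted that a witnessing assignment assigns every task (so that $V^* = pV + p\gamma$), whereas your averaging argument explicitly rules out partial assignments beating the threshold, which is exactly the $B \le m$ subtlety you identified. In exchange, the paper's gadget needs no capacity constraints and no balance argument. Since \textsc{3-Partition} and \textsc{Bin Packing} are both strongly NP-complete and both reductions keep all numbers polynomially bounded, each correctly establishes strong NP-hardness of HTA under the one-worker-per-task restriction.
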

\begin{proof}

	We prove the proposition by reducing the decision version of the \textit{Bin packing problem} (BP), which is a strongly NP-complete problem, to the decision version of the HTA problem. In the BP problem we are given a set $\{ o_1, o_2, ..., o_m \}$ of $m$ objects of sizes $\{ v_1, v_2, ..., v_m \}$ and a set of $n$ containers $\{ C_1, C_2, ..., C_n \}$, each of size $\gamma$, and we need to decide if all the objects can be fitted into the given containers. Our transformation maps the set of $m$ objects to a set of $m+1$ tasks $T = \{ t_1, t_2, ..., t_{m+1} \}$ with utilities $\{ v_1, v_2, ..., v_m, \gamma \}$ and the set of $n$ containers to a set of $n+1$ workers $W = \{ w_1, w_2, ..., w_{n+1} \}$. We consider the private case where all the workers have the same proficiency $p$ (i.e. $p_w = p, \forall w \in W$). The decision version of the HTA problem asks if there exists an assignment of the $m+1$ tasks to the $n+1$ workers that achieves a utility of at least $pV$, where $V = \sum_{i=1}^{m} v_i$.
	
	If we started with a YES instance of the BP problem, then there exists an assignment $\mathcal{A}$ that fits all $m$ objects into the $n$ containers. Consider the following assignment of tasks to workers in the HTA problem. If $\mathcal{A} (o_i) = C_j$, we assign task $t_i$ to worker $w_j$. Also, we assign task $t_{m+1}$ (with utility $\gamma$) to worker $w_{n+1}$. Note that no worker can achieve an individual utility greater than $p\gamma$, which is achieved by worker $w_{n+1}$. Thus, the utility of the overall task assignment is $\sum_{i=1}^{m} pv_i + p\gamma - p\gamma = pV$, meaning that our transformation produced a YES instance of the HTA problem.
	
	Now suppose that we ended up with a YES instance of the HTA problem. Then there exists a task assignment $\mathcal{B}$ such that the sum of utilities ($V^*$) minus the adversarial harm ($\gamma^*$) is at least $pV$ (i.e. $V^* - \gamma^* \geq pV$). Note that $V^* = \sum{i=1}^{m} pv_i + p\gamma = pV + p\gamma$ (each task is assigned to some worker). This implies $pV + p\gamma - \gamma^* \geq pV$ and $\gamma^* / p \leq \gamma$. Thus the utility sum (before performance p is applied) of the tasks assigned to any single worker cannot exceed $\gamma$. This could only happen if task $t_{m+1}$ (with utility $\gamma$) was the only task assigned to the corresponding player. WLOG let that worker be $w_{n+1}$. All other tasks must have been assigned to workers $\{ w_1, w_2, ..., w_n \}$. It is easy to see that this implies a feasible assignment of objects to containers in the BP problem - if $\mathcal{B} (t_j) = w_i$, for $1 \leq j \leq m$, then we place object $o_j$ in container $C_i$. Thus the transformation must have started off with a YES instance of the BP problem.
\end{proof} 

We now propose an algorithm which computes an approximately optimal assignment.
We begin by supposing that only one worker can be assigned per task (we relax this shortly).
In this case, the optimal attack can be computed using the following linear integer program:
\begin{subequations}
	\label{E:deterministicHet1}
	\begin{align}
	& \max_{\alpha} \sum_{w \in W} \alpha_w \sum_{t \in T} s_{wt} u_t  p_{w}  \label{objectiveDA}\\
	& s.t.: \sum_{w\in W} \alpha_w = \tau \label{cons:targets}\\
	&\alpha_w \in \{0,1\}.
	\end{align}
\end{subequations}
The objective (\ref{objectiveDA}) aims to maximize the effect of the attack (i.e., the utility of the targets). Constraint (\ref{cons:targets}) ensures that the adversary attacks exactly $ \tau $ workers.
First, note that the extreme points of the constraint set are integral, which means we can relax the integrality constraint to $\alpha_w \in [0,1]$.
In order to plug this optimization into the defender's optimal assignment problem, we convert this relaxed program to its dual form:
\begin{subequations}
	\label{E:deterministicHet2}
	\begin{align}
	& \min_{\lambda,\beta} \lambda\tau + \sum_w \beta_w\\
	& s.t.: \lambda + \beta_w \geq p_w\sum_{t \in T} s_{wt}u_t \ \forall \ w\\
          &  \beta \ge 0.
	\end{align}
\end{subequations}
Thus, the optimal assignment can be computed using the following linear integer program:
\begin{subequations}
	\label{E:deterministicHet}
	\begin{align}
	& \max_{s,\gamma,\lambda,\beta} \displaystyle\sum\limits_{w\in W}p_{w}\sum\limits_{t\in T} s_{wt} u_t   - \gamma  \label{objectiveDH}\\
	& s.t.: 
	\gamma  \geq \lambda\tau + \sum_w \beta_w\label{cons:maxDH}\\
	&\lambda + \beta_w \ge \sum\limits_{t\in T} s_{wt}u_t  p_{w}, \forall w \in W \label{cons:maxDHw}\\
	&\displaystyle\sum\limits_{w\in W}\sum\limits_{t\in T} s_{wt} = m \label{cons:allTasksDH}\\
	&\sum_w s_{wt} = 1, \forall t\in T\label{cons:singleWorkerTask}\\
	& \sum_t s_{wt} \leq c_w, \forall w\in W \label{cons:Capacity}\\
	&s_{wt} \in \{0,1\}.
	\end{align}
\end{subequations}
The objective (\ref{objectiveDH}) aims to maximize the defender's expected utility given the adversary's attack (second term). Constraint (\ref{cons:maxDH} and \ref{cons:maxDHw}) validates that the adversary's targets are the workers who contribute the most to the defender's expected utility and Constraint (\ref{cons:allTasksDH}) ensures that each allocation assigns all the possible tasks among the different workers.
Finally, Constraint~\eqref{cons:singleWorkerTask} ensures that only one worker is assigned for each task and Constraint~\eqref{cons:Capacity} ensures that no worker is assigned with more tasks than it can perform.




Next, we propose a greedy algorithm that attempts to incrementally improve utility by shifting workers among tasks, now allowing multiple workers to be assigned to a task.
Whenever more than one worker is assigned to a given task, the defender has to choose a deterministic mapping $ \delta $ to determine the outcome. 
We consider a very broad class of \emph{weighted majority} functions for this purpose (natural if successful completion of a task means that the worker returned the correct label).
In this mapping, each worker $ w $ is assigned a weight $ \theta_w $, and the final label is set according to the weighted majority rule, i.e., $ \delta(L_t) = \mathrm{sgn}(\sum_{w \in W_t(s)} \theta_w l_w) $.

In order to approximate the defender's expected utility, we use the sample average approximation (SAA)~\cite{kleywegt2002sample} for solving stochastic optimization problems by using Monte-Carlo simulation.
	Using this approach, the defender's utility can be approximated by:
	\begin{equation}\label{eq:HetDetMulti}
	u_{def}(C_K, W') = \sum_{t \in T}u_t \left(\sum_{k = 1}^{K} \frac{\mathbb{I}\{\mathrm{sgn}{\sum_{w \in W'}  s_{wt}\theta_w C_{wtk}}\}}{K}\right)
	\end{equation}
	where $ C_K $ is a set of $ K $ matrices, each of size $ n $ over $ m $. Each cell $ C_{wtk} $ is a randomly sample based on $ p_w $ represents whether or not the worker $w$ successfully completed the task.  That is, $ C_{wtk} =1$ if worker $ w $ successfully completed task $ t $, and $ C_{wtk} =0$ otherwise. In a similar manner, $ s_{wt} = 1 $ if worker $ w $ is assigned to task $ t $, and $ s_{wt} = 0 $ otherwise. 

	Algorithm~\ref{alg:HetroDetr} formally describes the computation of this assignment. 
	Given an optimal assignment extracted using the mixed-integer linear program in Equation~\eqref{E:deterministicHet}, we iteratively alternate over all tasks in ascending order based on their utility. For each task, we reassign the worker associated with this task to the most beneficial task. 
If this reassignment improves the defender's utility, we label it as beneficial (Steps 9 and 10). 
Finally, we commit to the reassignment that will maximize the defender's utility (Step 12).
\begin{algorithm}[hbt]
	\caption{Heterogeneous assignment} \label{alg:HetroDetr} 
	\begin{flushleft}
		\textbf{input:} The set of workers $W$, and their proficiencies $P$\\
		\textbf{return:} The heuristic deterministic allocation
	\end{flushleft}
	\begin{algorithmic}[1]
		\State Extract the optimal 1-worker allocation using Equation~\ref{E:deterministicHet}
		\State $ util \leftarrow u_{def}(C_K, \alpha)$
		\For {$ t \in \{1,\ldots,m\} $}
		\For {$ w \in \{1,\ldots,n\} $}
		\State $ \overline{t} = t $
		\If {$ s_{wt} = 1 $}
		\For {$ t' \in \{m,\ldots,t+1\} $}
		\State  $ s_{wt'} = 1 $,  $ s_{wt} = 0 $, Update $ \alpha $
		\If {$ u_{def}(C_K, \alpha) > util $}
		\State $ \overline{t} = t' $, $ util \leftarrow u_{def}(C_K, \alpha)$
		\EndIf	
		\State $ s_{wt'} = 0 $, $ s_{wt} = 1 $
		\EndFor
		\State  $ s_{wt} = 0 $, $ s_{w\overline{t}} = 1 $
		\EndIf
		\EndFor
		\EndFor
		\State \textbf{return} $ s $
	\end{algorithmic}
\end{algorithm}
\leaveout{
\subsection{Randomized strategy}


If we assume that a single worker is assigned to each task, it turns out that we can apply Algorithm~\ref{alg:DoSRand} directly in the case of randomized assignment as well.
To show this, we need to extend Proposition~\ref{prop:rep} to the heterogeneous assignment case; the remaining propositions, with the provision that one worker is assigned per task, do not rely on the fact that tasks are homogeneous and can be extended with minor modifications.
To this end, let $s_{wt}$ be a binary variable which is 1 iff a worker $w$ is assigned to task $t$.
From our assumption, $\sum_{w} s_{wt} = 1$ for each $t$ (since the budget constraint is $m$, we would assign a worker to each task).
Further, define $U = \sum_{t\in T} u_t$.

\begin{prop}\label{prop:heterogeneousRandom}
	Suppose tasks are heterogeneous and one worker is assigned to each task. Then for any distribution over assignments $q$ and attack strategy $\alpha$, there exists a distribution over $\tilde{S}$, $\lambda$, which results in the same utility.
\end{prop}
\begin{proof}
Fix an attacker strategy $\alpha$, and let $S$ the set of assignments $s$ in which a single worker is assigned to each task.
For any probability distribution $q$ over assignments $s \in S$, the expected utility of the defender is 
\begin{equation}
u_{def}(q,\alpha) = \sum_{w\in W} p_w (1-\alpha_w) \sum_{s \in S} q_s \sum_{t\in T}s_{wt}u_t
\end{equation}
The expected utility of the defender for a distribution $\lambda$ over $\tilde{S}$ (i.e., over workers) is 
\begin{equation}
u_{def}(\lambda,\alpha) = \sum_{t\in T}u_t\sum_{w\in W} p_w (1-\alpha_w) \lambda_w = U \sum_{w\in W} p_w (1-\alpha_w) \lambda_w
\end{equation}
Define $\lambda_w = \frac{\sum_{s \in S} q_s \sum_{t\in T}s_{wt}u_t}{U}$.
It suffices to show that $\sum_{w\in W} \lambda_w = 1$.
This follows since $\frac{\sum_{s \in S} q_s \sum_{t\in T}\sum_{w\in W}s_{wt}u_t}{U} = \frac{U\sum_{s \in S} q_s}{U} = 1$ because $ \sum_{w\in W}s_{wt}  = 1 $ and $q$ is a probability distribution.
\end{proof}
Thus, if we constrain the defender to use a single worker per task, we can randomize over workers, rather than full assignments, allowing us to compute a (restricted) optimal randomized assignment in linear time.
Unfortunately, generalizing the randomized assignment approach to allow multiple workers per task is non-trivial, and we leave it for future work (in fact, as our experiments below again demonstrate, there appears to be little added value to randomization in this setting in any case).

\subsection{Experiments}

Our final analysis targets the the resulted expected utility given heterogeneous tasks. 
We used CPLEX version 12.51 to solve the linear and integer programs above. The simulations were run on a 3.4GHz hyperthreaded 8-core Windows machine with 16 GB RAM.
\leaveout{We generated utilities of different tasks using 6 different uniform distributions: \{[0,0.5], [0,1], [0,5], [0,10], [0,50], [0,100]\}, varied the number of workers between 2 and 15, and considered 15 tasks.
Worker proficiencies were again sampled from the uniform distribution over the [0.5,1] interval.
Results were averages over 1,000 simulation runs.

\begin{figure}
	\centering
	\includegraphics[width=1\linewidth]{compareHet}
	\caption{\small Improvement ratio - Randomized vs. Deterministic.}
	\label{fig:compareHet}
\end{figure}
Figure~\ref{fig:compareHet} shows proportion difference between randomized and deterministic allocations for different numbers of workers and distributions from which task utilities are generated.
As we can observe, the difference is remarkably small: in all cases, the gain from using a randomized allocation is below 1.4\%, which is even smaller (by a large margin) than what we had observed in the context of homogeneous tasks.
However, there is an interesting difference we can observe from the homogeneous task setting: now increasing the number of workers considerably increases the advantage of the randomized allocation, whereas when tasks are homogeneous we saw the opposite trend.}

\begin{table}	
	\centering
	\scriptsize
		\begin{tabular}{|c|c|c|c|c|c|c|}
			\hline 
			&\multicolumn{6}{|c|}{Workers} \\ 
			\hline 
			Dist. & Tasks& 2 &3&4&5&6\\\hline
			U[0,1] &3  & 52.34\% & 35.72\%    \\ 
			\cline{1-5}
			U[0,1] &4  & 7.96\%	 & 	10.43\%	 & 9.57\%    \\ 
			\cline{1-6}
			U[0,1] &5  & 2.95\% & 3.59\%  & 4.01\% & 3.73\% \\ 
			\hline 
			U[0,1] &6  & 1.71\% & 1.75\% & 1.57\% & 1.79\% & 1.88\% \\ \hline
			U[0,100] &3  & 36.59\% & 33.24\%    \\ 
			\cline{1-5}
			U[0,100] &4  & 9.18\%	&7.39\%	&9.2\%    \\ 
			\cline{1-6}
			U[0,100] &5  & 2.52\%	&3.08\%	&3.25\%	&3.25\% \\ 
			\cline{1-7}
			U[0,100] &6  & 0.99\%&	1.17\%	&1.54\%&	1.75\%	&1.48\%\\
			\hline 
	\end{tabular}
	\caption{Conversation heuristic: expected improvement.}
	\label{tab:HetroDetImp}
\end{table}

Next, we evaluate the added benefit of allowing more than 1 worker per task.
We use a natural weighted majority decision rule with $\theta_w = p_w$.
To evaluate the margin of improvement, we extract the expected utility for the defender using both assignments for environments where tasks utilities are uniformly distributed over the $ [0, 1] $ and $ [0,100] $ intervals, populated with different number of workers and tasks as presented in Table \ref{tab:HetroDetImp}. Each marginal improvement averaged over $ 1,000 $ runs, where $ K = 2,500 $.
We can see that there are cases where assigning multiple workers per task can offer a significant benefit.
However, as the problem size increases, this benefit significantly attenuates, so that with more than a few workers and tasks the restriction of assigning a single worker per task does not appear to be particularly costly.
}

\section{Experiments}
We now experimentally demonstrate the effectiveness of our proposed approaches.
Workers' proficiencies are sampled using two distributions: 
a uniform distribution over the $[0.5,1]$ interval and an exponential distribution with $ \mu = 0.25 $ where proficiencies are truncated to be in this interval for the latter.
We compare our adversarial assignment algorithms to three natural
baselines: \textit{Split-$ k $} and two versions of
\textit{Monte-Carlo} (involving random assignment of tasks to workers).
Specifically, for the Split-$ k $ method, we divide tasks equally
among the top $ k $ workers.\footnote{The remainder is assigned in an iterative way from the least proficient worker to the most proficient one.}
For the Monte-Carlo
approach, we consider a simple variant which randomly distributes
tasks among all the workers, denoted by \emph{Monte-Carlo}, and a
variant of this which randomly distributes the tasks among the top $
\ceil{\frac{n}{2}} $ workers, denoted by \emph{Top Monte-Carlo}. In both
cases, the assigned worker for each task is picked uniformly at random. 

\leaveout{
\begin{figure}[htb]
	\centering
	\begin{subfigure}{0.25\textwidth}
		\centering
		\includegraphics[width=0.8\linewidth]{DeterministicUtililty}
		\caption{Defender's expected utility}
		\label{subfig:deterministicutility}
	\end{subfigure}%
	\begin{subfigure}{0.25\textwidth}
		\centering
		\includegraphics[height=2.5cm,width=1\linewidth]{DeterministicAss}
		\caption{Assigned workers}
		\label{subfig:deterministicass}
	\end{subfigure}\hfil
	\caption{\small Comparison between uniform and power law distributions of worker proficiencies.}
	\label{fig:distribution}
\end{figure}
Figure~\ref{fig:distribution} compares the uniform and power law distributions in terms of the expected defender utility and the number of workers assigned any tasks in an optimal 
assignment (Figure~\ref{fig:distribution}a-b) when the number of targets is $ \tau = 1 $. 
Consistently, under the power law distribution of proficiencies, the
defender's utility is lower, and fewer workers are assigned tasks in
an optimal assignment. An interesting observation is the fact that
although the adversary observes the set of assigned workers (as well
as the number of tasks assigned to each), the number of workers
assigned is minimal compared to the number of available ones.
}

\vspace{-10pt}
\paragraph{Homogeneous Tasks}

We begin by considering homogeneous tasks.
For each experiment,  we take an average of 5,000 sample runs.

Figure~\ref{fig:Baselines} presents the results comparing our
algorithm to baselines for 50 workers and tasks.
As the figure shows, our algorithm outperforms the baselines,
and the gap becomes particularly pronounced as the number of targets
increases.
Moreover, there doesn't appear to be a qualitative difference between
uniform and exponential distribution in this regard.
\begin{figure}[h!]
	\centering
	\begin{subfigure}{0.5\linewidth}
	\centering
	\includegraphics[width=1\linewidth]{homgenousComparison}
	\caption{\small Uniform distribution}
	\label{fig:uniform1}
	\end{subfigure}%
	\begin{subfigure}{0.5\linewidth}
	\centering
	\includegraphics[width=1\linewidth]{homgenousComparisonExp}
	\caption{\small Exponential distribution}
	\label{fig:uniform2}
	\end{subfigure}%
	\caption{\small Homogeneous tasks: comparison to baseline
           methods.}
	\label{fig:Baselines}
\end{figure}
\leaveout{
Next, we experimentally compare the optimal randomized and deterministic assignments in terms of (a) defender's utility, and (b) the number of workers assigned tasks.
\myworries{ Replace: In this case, we only show the results for the uniform distribution over worker proficiencies.}
\myworries{With: In this case, we only show the results for the case where $ \tau = 1 $ relating to the common analysis of $ N-1 $ robustness. We note that we observe similar results for cases where $ \tau > 1 $.}
It is, of course, well-known that the optimal randomized assignment must be at least as good as deterministic (which is a special case), but the key question is by how much.
As Figure~\ref{fig:ratio} shows, the difference is quite small: always below 5\%, and decreasing as we increase the number of tasks from 10 to 200.
\myworries{Similarly, in term of number of assigned workers, the actual policies are not so different in nature: the number of most proficient workers are assigned to tasks in the deterministic case is higher than the one on the randomized case by only $ 4.6\% $ on average.}
\begin{figure}[hbtp]
	\centering
		\centering
		\includegraphics[height=3cm,width=1\linewidth]{homgenousComparison}
		\caption{\small Improvement ratio - Randomized vs. Deterministic.}
		\label{fig:ratio}
\end{figure}
The implication of this observation is that from the defender's perspective it is not crucial to know precisely what the adversary observes about the assignment policy. 
While in general Stackelberg games randomized commitment is much better than deterministic~\cite{Letchford14}, in our setting the deterministic assignment is nearly as good as randomized, while requiring weaker assumptions on the adversarial knowledge.
}

It is natural that we must trade off robustness with performance of
robust algorithms in non-adversarial settings.
We therefore conclude the homogeneous analysis by analyzing the loss incurred by allowing for robustness, compared to a solution which is optimal in non-adversarial settings.
We vary the number of workers from 2 to 50, and fix the number of
tasks at 100 and the number of targets optimized against at $ t = 1 $.
\begin{table}[h!]
	\centering
	\resizebox{\linewidth}{!}{\begin{tabular}{|c|c|c|c|c|c|c|c|c|c|c|}
			\hline 
			Workers& 5 &10&15&20&25&30&35&40&45&50\\\hline
			Exp. loss& 24.9\%&   17.4\%& 15.27\% &13.2\%& 11.6\%& 8.6\%& 5.8\%& 5.8\%& 6.5\%&    4.6\%\\ 
			\hline 
	\end{tabular} }
	\caption{Expected loss of using adversarial assignment in non-adversarial settings.}
	\label{tab:expectedLoss}
\end{table}


Table \ref{tab:expectedLoss} shows the expected loss of using 
adversarial task assignment in a non-adversarial settings. 
With only 5 workers, we pay a steep price (just under 25\%), but as
the number of workers increases, the loss shrinks; with 50 workers, we
only lose 4.6\% compared to optimal non-robust assignment.

\vspace{-10pt}
\paragraph{Heterogeneous Tasks}
We used CPLEX version 12.51 to solve the integer linear program above. 
\leaveout{We generated utilities of different tasks using 6 different uniform distributions: \{[0,0.5], [0,1], [0,5], [0,10], [0,50], [0,100]\}, varied the number of workers between 2 and 15, and considered 15 tasks.
	Worker proficiencies were again sampled from the uniform distribution over the [0.5,1] interval.
	Results were averages over 3,000 simulation runs.
	
	\begin{figure}
		\centering
		\includegraphics[width=1\linewidth]{compareHet}
		\caption{\small Improvement ratio - Randomized vs. Deterministic.}
		\label{fig:compareHet}
	\end{figure}
	Figure~\ref{fig:compareHet} shows proportion difference between randomized and deterministic allocations for different numbers of workers and distributions from which task utilities are generated.
	As we can observe, the difference is remarkably small: in all cases, the gain from using a randomized allocation is below 1.4\%, which is even smaller (by a large margin) than what we had observed in the context of homogeneous tasks.
	However, there is an interesting difference we can observe from the homogeneous task setting: now increasing the number of workers considerably increases the advantage of the randomized allocation, whereas when tasks are homogeneous we saw the opposite trend.}

First, we  analyze how the heterogeneous assignment given in
mixed-integer linear program (MILP)~\eqref{E:deterministicHet} performs compared to the baselines
when task utilities are sampled from $ U[0,1] $ and worker
proficiencies are samples from $ U[0.5,1] $. 
We use similar baseline methods to the ones used in studying
homogeneous task assignment.  

\begin{figure}[h!]
	\centering
	\begin{subfigure}{0.5\linewidth}	
	\includegraphics[width=1\linewidth]{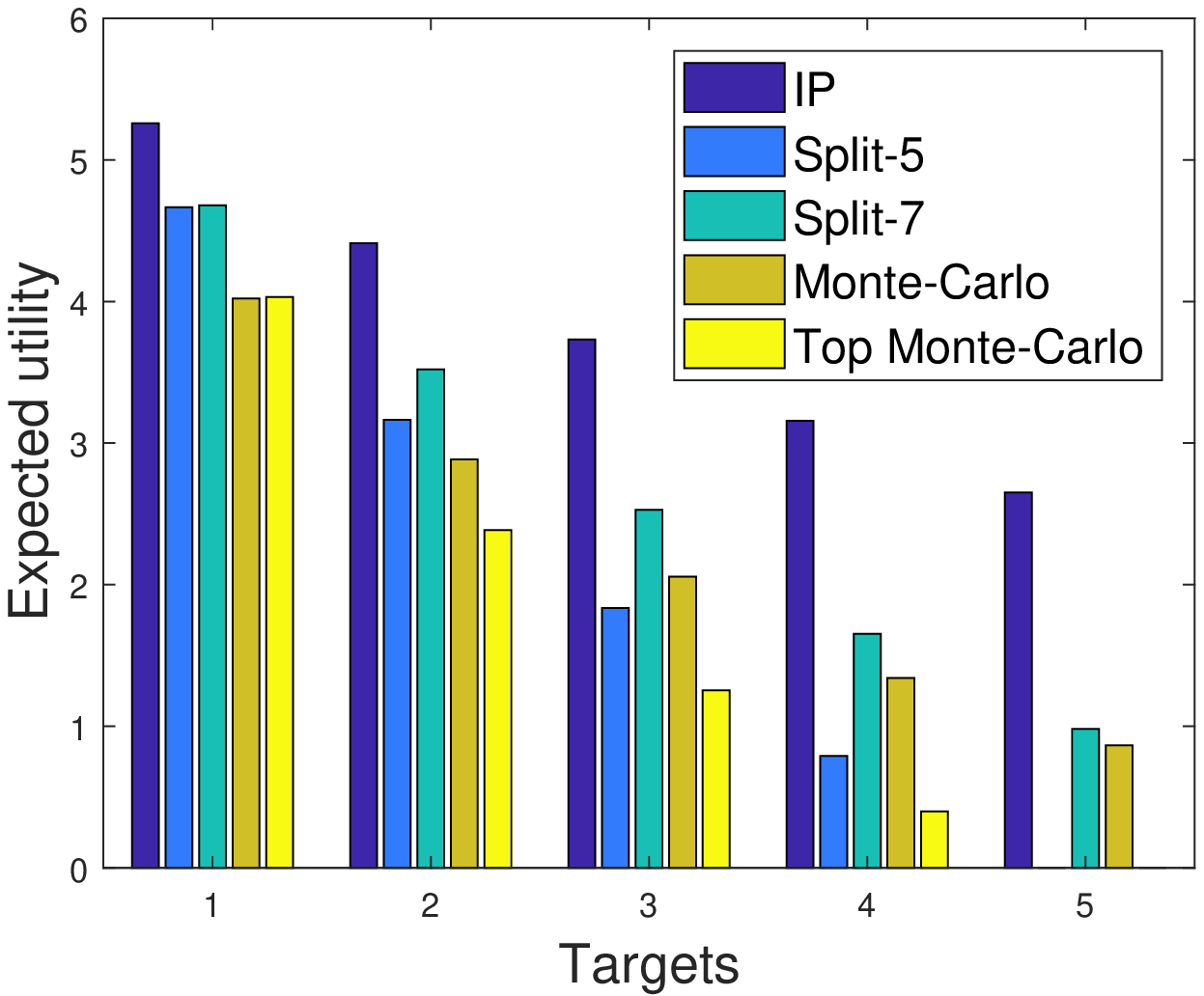}
	\caption{\small Uniform distribution}
	\end{subfigure}%
	\begin{subfigure}{0.5\linewidth}	
	\includegraphics[width=1\linewidth]{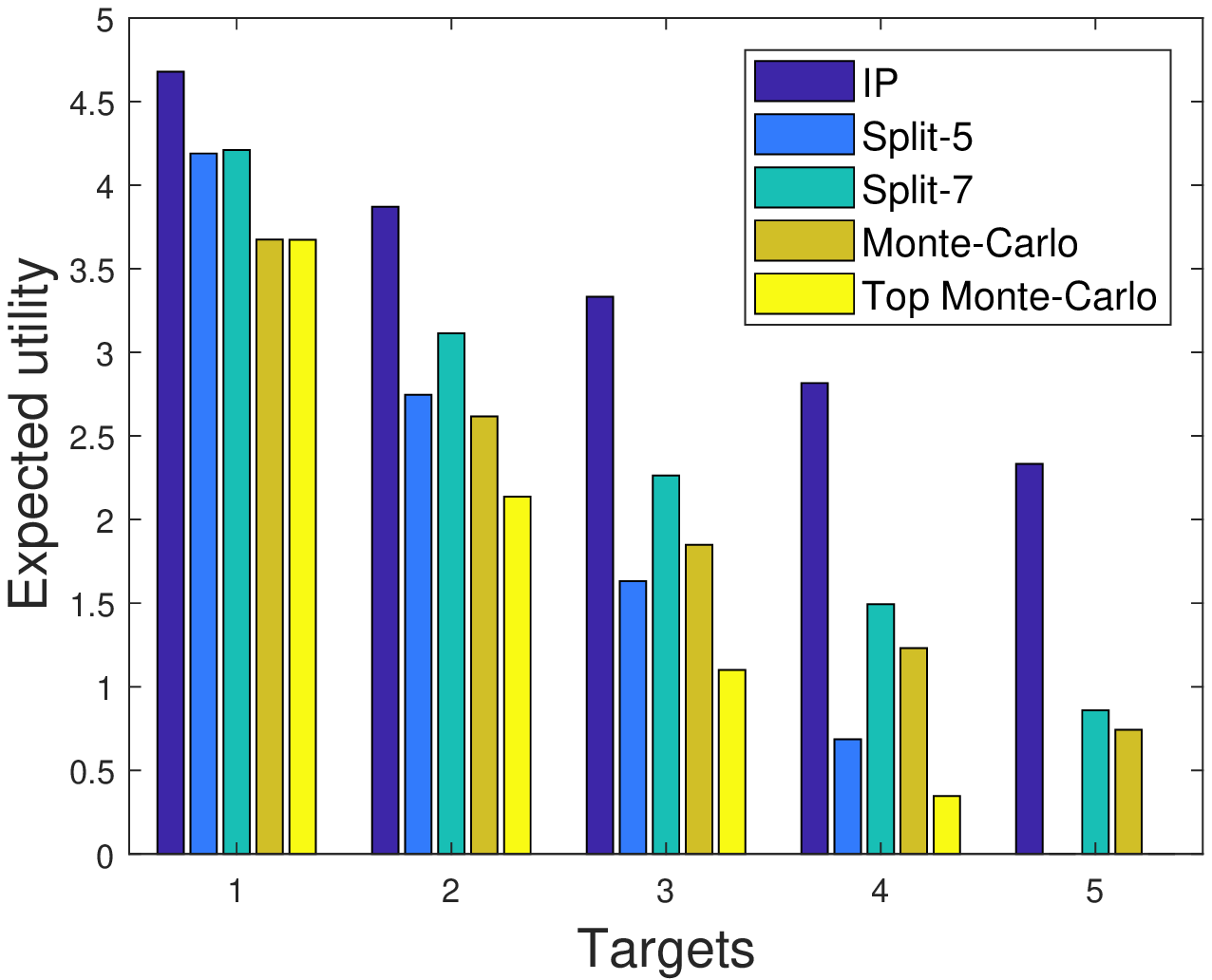}
	\caption{\small Exponential distribution}
	\end{subfigure}
	\caption{\small Heterogeneous tasks: comparison to baseline  methods.}
	\label{fig:baselineshet}
\end{figure}
Figure~\ref{fig:baselineshet} depicts the expected utility for the
defender when using each of the methods in an environment populated
with 15 tasks and 10 workers where the number of targets the adversary
attacks varies between $ 1 $ and $ 5 $ over $ 3,000 $ runs.
As is evidence from the figure, even the baseline mixed-integer linear
program (which assumes a single worker is assigned per task)
significantly outperforms the baselines, with the difference growing
as we increase the number of workers attacked.

\begin{table}[h!]
	\centering
		\resizebox{\linewidth}{!}{\footnotesize
	\begin{tabular}{|c|c|c|c|c|c|c|}
		\hline 
		&\multicolumn{6}{|c|}{Workers} \\ 
		\hline
		Dist. & Tasks& 2 &3&4&5&6\\\hline
		U[0,1] &3  & 57.08\% & 37.10\%    \\ 
		\cline{1-5}
		U[0,1] &4  & 26.47\%	 & 	9.88\%	 & 9.17\%    \\ 
		\cline{1-6}
		U[0,1] &5  & 22.03\% & 3.83\%  & 3.39\% & 3.46\% \\ 
		\hline 
		U[0,1] &6  & 19.98\% & 2\% & 1.79\% & 1.93\% & 1.66\% \\ \hline
		U[0,100] &3  &56.9\% & 37.92\%    \\ 
		\cline{1-5}
		U[0,100] &4  & 28.69\%	&9.59\%	&8.86\%    \\ 
		\cline{1-6}
		U[0,100] &5  & 20.02\%	&3.59\%	&3.51\%	&3.49\% \\ 
		\cline{1-7}
		U[0,100] &6  & 17.41\%&	1.59\%	&1.71\%&	1.64\%	&1.77\%\\
		\hline 
	\end{tabular}}
	\caption{Average improvement using
          Algorithm~\ref{alg:HetroDetr}; $ \tau = 1 $.}
	\label{tab:HetroDetImp}
\end{table}
\begin{table}[h!]
	\centering
		\resizebox{\linewidth}{!}{\footnotesize
	\begin{tabular}{|c|c|c|c|c|c|}
		\hline 
		&\multicolumn{5}{|c|}{Workers} \\ 
		\hline 
		Dist. & Tasks& 3&4&5&6\\\hline
		U[0,1] &3  & 1115.41\%     \\ 
		\cline{1-4}
		U[0,1] &4  & 46.27\%	 & 49.75\%    \\ 
		\cline{1-5}
		U[0,1] &5  & 19.52\% & 16.01\%  & 21.68\% \\ 
		\hline 
		U[0,1] &6  &9.88\% & 7.49\% & 10.9\% & 12.18\%   \\ \hline
		U[0,100] &3  &1130.13\%    \\ 
		\cline{1-4}
		U[0,100] &4  &58.23\%	&64.45\%    \\ 
		\cline{1-5}
		U[0,100] &5  & 17.97\%	&14.62\%	&21.21\% \\ 
		\cline{1-6}
		U[0,100] &6  & 8.62\%&	7.05\%	&9.83\%&	11.51\%\\
		\hline 
	\end{tabular}}
	\caption{Average improvement using
          Algorithm~\ref{alg:HetroDetr}; $ \tau = 2 $.}
	\label{tab:HetroDetImp2}
\end{table}


Next, we evaluate how much more we gain by using
Algorithm~\ref{alg:HetroDetr} after computing an initial assignment
using MILP~\eqref{E:deterministicHet}.
In these experimets we use a natural weighted majority decision rule
with $\theta_w = p_w$ (i.e., workers' proficiencies), and set
$K=2500$.
We consider two uniform distributions for this study: $U[0,1]$ and $U[0,100]$.
Each marginal improvement is averaged over 3,000
runs.

The results are shown in Tables \ref{tab:HetroDetImp} and
\ref{tab:HetroDetImp2}.
We can see that there are cases where assigning multiple workers per task can offer a significant benefit.
However, as the problem size increases, this benefit significantly
attenuates, and it may suffice to just rely on the assignment obtained
from the MILP.

\section{Conclusion}

We consider the problem of assigning tasks to workers when workers can be attacked, and their ability
to successfully complete assigned tasks compromised.
We show that the optimal assignment problem (in the sense of Stackelberg equilibrium commitment), when the attack takes place after the tasks have been assigned to workers, can be found in pseudo-polynomial time.
Furthermore, when tasks are heterogeneous, we show that the problem is more challenging, as it could be optimal to assign multiple workers to the same task.
Even if we constrain the assignment such that  only one worker is assigned per task, extracting the optimal assignment becomes strongly NP-Hard (we exhibit an integer linear program for the latter problem). Finally, we provide with an algorithm of converting this constraint assignment to one that allows multiple workers per task (and hence approximate optimal allocation). 
\section*{Acknowledgments}
This research was partially supported by the National  Science Foundation (CNS-1640624, IIS-1526860, IIS-1649972), Office of  Naval Research (N00014-15-1-2621),  Army  Research  Office (W911NF-16-1-0069), and National Institutes of Health (UH2 CA203708-01, R01HG006844-05).

\bibliographystyle{plain}
\bibliography{adversarialC}
\end{document}